\documentclass[11pt]{article}

\usepackage[final]{acl}

\usepackage{times}
\usepackage{latexsym}
\usepackage[T1]{fontenc}
\usepackage[utf8]{inputenc}
\usepackage{microtype}
\usepackage{inconsolata}
\usepackage{graphicx}
\usepackage{amsmath}
\usepackage{amssymb}
\usepackage{amsthm}
\usepackage{booktabs}
\usepackage{enumitem}

\newtheorem{definition}{Definition}

\newtheorem{theorem}{Theorem}
\newtheorem{proposition}{Proposition}

\newcommand{\Hq}{\nabla^2 C(q)}
\DeclareMathOperator*{\argmax}{arg\,max}

\title{Trade-Adaptive Aggregation of Probabilistic Financial Forecasts}

\author{%
  \textbf{Yankai Chen}\textsuperscript{1,2},
  \textbf{Rassul Magauin}\textsuperscript{2},
  \textbf{Bowei He}\textsuperscript{1,2},
  \textbf{Anuar Aimoldin}\textsuperscript{2},
\\
  \textbf{Sirui Song}\textsuperscript{1},
  \textbf{Bin Xiao}\textsuperscript{3},
  \textbf{Zangir Iklassov}\textsuperscript{2},
  \textbf{Xue Liu}\textsuperscript{1,2}
\\
  \textsuperscript{1}McGill University \quad
  \textsuperscript{2}MBZUAI \quad
  \textsuperscript{3}The Hong Kong Polytechnic University
}

\newcommand{\newtext}[1]{#1}

\begin{document}
\maketitle

\begin{abstract}
Financial NLP systems produce probabilistic forecasts from news, reports, and filings. Prediction markets can aggregate these forecasts sequentially, but their fees must reward information without overcharging low-risk updates. Existing quadratic-fee mechanisms use a state-blind bound, while a local-curvature envelope remains conservative because it prices every trade at the largest permitted span. We introduce \emph{SpanPM}, a prediction-market mechanism that sets the local-curvature multiplier from each trade's realized payoff spread. Its fee dominates exact Bregman exposure trade by trade, preserves no arbitrage, information incorporation, expressiveness, and bounded worst-case loss, and yields a tighter overcharge factor approaching one as trade span vanishes. Repeated global best responses converge to a common belief and become full Newton steps locally, giving quadratic rather than damped-linear convergence. We implement a deterministic bounded one-dimensional multi-basin search, audited against a dense grid. Across paired synthetic experiments, SpanPM improves 20-round consensus error by several orders of magnitude over a fixed-envelope local baseline under the same hard cap. With evolving beliefs, it preserves 96--97\% of the trader surplus achieved with exact Bregman fees while cutting excess fees by 94\% relative to the global quadratic mechanism. These results establish a trade-adaptive prediction-market mechanism for sequential aggregation of probabilistic financial forecasts.
\end{abstract}

\section{Introduction}
Financial NLP systems turn news, reports, and filings into probabilistic predictions such as sentiment labels and event outcomes\newtext{~\cite{araci2019finbert,xie2023pixiu,wang2026finauditing,wang2026convfinre,zhang-etal-2026-finreporting}}.
In practice, several models, \newtext{agents,} or analysts may issue forecasts sequentially, with different confidence and information\newtext{~\cite{peng2026herculean,xie2026finmmevaltask3,gong2026shijianbench}}.
Prediction markets offer a transparent aggregation layer: each forecast acts as a trader belief, while the market price summarizes the beliefs incorporated so far~\cite{hanson2003combinatorial,wolfers2004prediction}.
This paper studies the mechanism after a forecaster has produced a probability vector; it does not propose a new text encoder.

An automated market maker (AMM) quotes prices and accepts trades even without a matched counterparty.
In a cost-function market maker, a convex cost function determines the price and the payment for a trade~\cite{abernethy2013efficient}.
The payment can be decomposed into the current-price value plus a Bregman fee that compensates the market maker for incremental exposure.
SQPM replaces this exact fee with a global quadratic upper bound~\cite{nueve2025smooth}.
The bound supplies pathwise protection but relaxes one-shot incentive compatibility: information is incorporated through repeated, smaller trades.

\textbf{The global fee can be badly calibrated to local risk.}
SQPM charges the same quadratic fee for an equal-sized trade at every market state.
When the price is already confident, however, a trade may change little payoff risk and have a very small Bregman fee.
The state-independent surcharge does not shrink with that risk and can therefore dominate the value of the update.
For sequential financial forecasts, this matters because a costly aggregation rule may suppress small corrections precisely when the consensus is confident.

\begin{figure}[t]
\centering
\includegraphics[width=0.88\columnwidth]{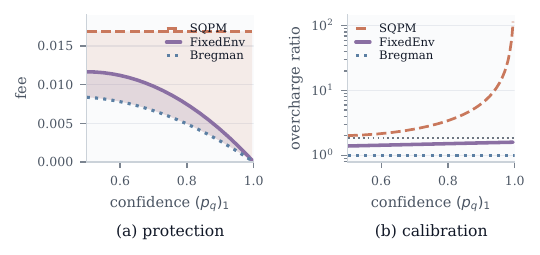}
\caption{Fee calibration for a fixed trade as market confidence increases. \textbf{(a)} A local-curvature fee's excess above the Bregman fee shrinks with local risk, whereas SQPM's global fee does not. \textbf{(b)} Its fee-to-Bregman ratio remains uniformly bounded. SpanPM retains these state-adaptive benefits and further adapts the multiplier to realized trade spread.}
\label{fig:overcharge-1d}
\end{figure}

Local-curvature fees improve calibration, but a fixed-envelope construction still inflates \emph{every} trade by the worst curvature drift allowed at the cap. This is unnecessarily conservative near consensus, where optimal trades are small. We call the fixed-envelope baseline \emph{FixedEnv} and our realized-span mechanism \emph{SpanPM}. SpanPM scales the local Hessian quadratic by the curvature drift implied by the trade's \emph{realized} payoff spread, while retaining one hard maximum spread for safety. Thus both the state and the trade determine the charge.

The construction exposes a useful design trade-off.
For a fixed price rule, we show that the Bregman fee is the unique differentiable fee preserving one-shot incentive compatibility.
Stronger per-trade protection must therefore relax that property and rely on a sequence of trades.
Within this regime, SpanPM preserves no arbitrage, expressiveness, information incorporation, and bounded worst-case loss. Its per-trade overcharge factor is \(\rho(\kappa s(r))\), strictly tighter than using the hard cap whenever the realized spread is smaller. The vanishing inflation also changes sequential aggregation: SQPM induces gradient-like updates, the fixed-envelope local baseline induces damped Newton-like updates, and SpanPM approaches a full Newton step as consensus makes trades small. We prove convergence and a local quadratic rate.

Our contributions are:
\begin{itemize}[leftmargin=*,topsep=2pt,itemsep=1pt,parsep=0pt,partopsep=0pt]
\item We characterize the one-shot incentive-compatibility constraint in fixed price-plus-fee markets and formalize the calibration failure of a global quadratic fee.
\item We propose SpanPM, a trade-span-adaptive local fee with per-trade safety, information incorporation, and a realized rather than worst-cap calibration factor.
\item We prove global consensus convergence and local quadratic convergence, and reduce each nonquadratic response problem to a bounded one-dimensional search, evaluated by a deterministic multi-basin method and audited on a dense grid.
\item In paired simulations and solver audits, we compare SpanPM with DCFMM, SQPM, and a fixed-envelope local baseline, including convergence, fees, surplus, cap sensitivity, and numerical safety checks.
\end{itemize}

The experiments deliberately isolate the aggregation mechanism using synthetic probability forecasts\newtext{~\cite{liu2026finsimsurvey}}. This makes the mechanism comparison controlled, while leaving end-to-end evaluation with financial text models as an important next step.

\section{Preliminaries}

\textbf{Market and cost functions.}
We study a prediction market with \(d\) mutually exclusive and exhaustive outcomes, trading Arrow--Debreu securities \cite{arrow1964role}: a trader who purchases bundle \(r\in\mathbb{R}^d\) pays the market maker for it and receives \(r_i\) if outcome \(i\) is realized.
A valid set of prices is therefore a probability distribution over these \(d\) outcomes. We write the set of all such distributions as the simplex \(\Delta_d=\{p\in\mathbb{R}^d_+:\langle \mathbf{1},p\rangle=1\}\), where \(\mathbf{1}\in\mathbb{R}^d\) is the all-ones vector. Market prices must lie in \(\Delta_d\).
A cost-function market maker \cite{abernethy2013efficient} keeps a state \(q\in\mathbb{R}^d\) recording cumulative inventory sold, and a trade \(r\) moves the state from $q$ to \(q+r\).
The market is specified by a convex, increasing, translation-invariant cost function \(C:\mathbb{R}^d\to\mathbb{R}\) (standard technical assumptions in the appendix) whose gradient \(p_q=\nabla C(q)\in\Delta_d\) is the instantaneous price, with price range dense in \(\Delta_d\) (able to get arbitrarily close to any valid price).
Intuitively, this price is the market's current belief about how likely each outcome is.
Translation invariance makes \(\nabla^2C(q)\mathbf{1}=0\), so the Hessian is singular along the riskless direction \(\mathbf{1}\) and the market's meaningful local geometry lives on the zero-sum subspace \(E=\{v\in\mathbb{R}^d:\langle \mathbf{1},v\rangle=0\}\), i.e., the geometry our proposed fee will exploit.
% Intuitively, shifting the inventory of every outcome by the same amount changes no relative price and creates no risk, so all of the curvature that matters for fee design comes from trades that shift confidence \emph{between} outcomes, not trades along this riskless direction.

\textbf{Bregman divergence.}
For a differentiable convex function \(f\), the Bregman divergence is \(D_f(x,y)=f(x)-f(y)-\langle \nabla f(y),x-y\rangle\ge 0\) \cite{bregman1967relaxation}, the error from approximating \(f(x)\) by the first-order Taylor expansion at \(y\).
When \(f\) is twice differentiable, it also admits the curvature representation:
\begin{equation}
    D_f(y+r,y)
    =
    \int_0^1
        (1-\theta)\,
        \langle r,\nabla^2 f(y+\theta r)r\rangle
    \,d\theta ,
\end{equation}
i.e., the accumulated curvature along the trade segment -- the starting point for our fee design.
A function is \(L\)-smooth if \(D_f(x,y)\le \frac{L}{2}\|x-y\|^2\) for all \(x,y\); this global bound is exactly what SQPM uses as its fee.

\textbf{Price-plus-fee markets.}
Every payment decomposes as
\begin{equation}
    \mathrm{Pay}(q,r)
    =
    \langle \nabla C(q),r\rangle
    +
    \mathrm{Fee}(q,r),
\end{equation}
where the first term \(\langle \nabla C(q),r\rangle\) values the trade at the current instantaneous price \(p_q=\nabla C(q)\) (price times quantity, summed over the \(d\) securities), and \(\mathrm{Fee}(q,r)=o(\|r\|)\) as \(r\to 0\), so that the instantaneous price is unaffected by the fee \cite{nueve2025smooth}.
Intuitively, the trader pays the current price plus a fee covering the trade's added risk.
The classical Duality-based Cost-function Market Maker (DCFMM) \cite{abernethy2013efficient} charges the exact cost difference \(C(q+r)-C(q)\); by the definition of Bregman divergence, its fee is exactly \(\mathrm{Fee}_D(q,r)=D_C(q+r,q)\).
Smooth Quadratic Prediction Markets (SQPM) instead charges the global quadratic surcharge \cite{nueve2025smooth}, using the smoothness constant \(L\) of \(C\) introduced above:
\begin{equation}
    \mathrm{Fee}_L(q,r)
    =
    \frac{L}{2}\|r\|^2 ,
\end{equation}
which dominates the Bregman fee whenever \(C\) is \(L\)-smooth but depends only on \(\|r\|\), not on the market state \(q\).

\textbf{Market-design guarantees.}
We use the standard requirements for prediction-market mechanisms: (1) no arbitrage (no sequence of trades yields a risk-free profit), (2) expressiveness (every belief in the price range is reachable by some trade), (3) information incorporation (buying the same bundle again after it moves the market does not become cheaper), (4) bounded worst-case loss (the market maker cannot lose more than a fixed amount, however the outcome resolves), and (5) incentive compatibility (formal definitions in the appendix).
The DCFMM satisfies (1)--(4) and achieves the strongest form of (5), \emph{one-shot} incentive compatibility \cite{abernethy2013efficient}: a trader can move the market price to their belief in a single trade.
SQPM also satisfies (1)--(4) but relaxes (5) to an \emph{incremental} form: repeated trading by like-minded traders still drives prices to consensus, just not in a single trade \cite{nueve2025smooth}.
This distinction is central to our paper: once relaxed, the fee no longer only protects the market maker, but also determines how quickly prices move toward traders' beliefs.

\textbf{Softmax running example.}
Our main concrete example is the softmax cost function, also known as the Logarithmic Market Scoring Rule (LMSR), a standard prediction-market mechanism \cite{hanson2003combinatorial}.
\begin{equation}
    C(q)=\frac{1}{L}\log\sum_{i=1}^d e^{Lq_i},
    \quad
    (p_q)_i=\frac{e^{Lq_i}}{\sum_{j=1}^d e^{Lq_j}},
\end{equation}
where \(L>0\).
The price vector lies in \(\operatorname{relint}(\Delta_d)=\{p\in\Delta_d:p_i>0\text{ for all }i\}\): softmax prices never fully rule out any outcome, only approach that boundary as the market gets more confident, and the closure of the price range is the full simplex \(\Delta_d\).
The Hessian, and the resulting quadratic form for any trade \(r\), are
\begin{align}
    \nabla^2 C(q)
    &=
    L\left(\operatorname{diag}(p_q)-p_qp_q^\top\right), \notag\\
    \langle r,\nabla^2 C(q)r\rangle
    &=
    L\,\operatorname{Var}_{p_q}(r).
\end{align}
Thus, in the softmax market, local curvature is proportional to the payoff variance of the trade under the current belief, large when the payoff is uncertain and small when it is nearly deterministic. This is exactly the distinction a state-independent quadratic fee cannot capture.

\section{Realized-Span Fee Design}

\label{sec:ca-fees}

\textbf{Overview.}
This section develops SpanPM's fee from first principles.
We begin with a rigidity result: for a fixed price rule, the Bregman fee is the unique differentiable fee that preserves one-shot incentive compatibility. This explains why any mechanism offering stronger per-trade protection must relax it.
We then identify a sufficient condition for safe replacement fees and derive a fixed-envelope local-curvature baseline. Finally, we replace its worst-cap multiplier by the realized trade span. The resulting fee retains the standard guarantees while tightening calibration and changing the local learning rate.

\subsection{Fee Design Beyond One-Shot Incentive Compatibility}

\textbf{Why the fee is the design object.}
In a price-plus-fee market, the cost function \(C\) fixes the instantaneous price rule \(p_q=\nabla C(q)\). The fee then determines what the trader pays beyond the current-price value of the trade.
Thus, once the price rule is fixed, fee design controls the tradeoff between trader incentives, market-maker protection, and information incorporation.
Recall from \emph{Preliminaries} that the DCFMM's Bregman fee achieves one-shot incentive compatibility.
The following proposition shows that this property leaves no freedom to choose a different differentiable fee.

\begin{proposition}[Uniqueness of the Bregman fee]
\label{prop:bregman-unique}
Fix a differentiable, translation-invariant cost function \(C\) whose price map is injective on the zero-sum subspace \(E\).
Consider a differentiable price-plus-fee market with payment
\begin{equation}
    \mathrm{Pay}(q,r)
    =
    \langle \nabla C(q),r\rangle
    +
    \mathrm{Fee}(q,r),
\end{equation}
where \(\mathrm{Fee}(q,0)=0\).
Assume also that \(\mathrm{Fee}(q,\cdot)\) depends on a trade only through its risky component, i.e., \(\mathrm{Fee}(q,r+\alpha\mathbf{1})=\mathrm{Fee}(q,r)\) for all \(\alpha\in\mathbb{R}\).
A riskless overlay changes every outcome's payoff by the same amount and therefore creates no risk.
If the market satisfies one-shot incentive compatibility for every belief in the price range of \(C\), then
\begin{equation}
    \mathrm{Fee}(q,r)=D_C(q+r,q).
\end{equation}
\end{proposition}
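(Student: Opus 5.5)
I would prove the proposition by extracting a first-order condition from one-shot incentive compatibility along every ray of trades, and then integrating it.

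Fix a state \(q\) and a belief \(\mu\) in the price range of \(C\). One-shot incentive compatibility says the trader's expected profit
\begin{equation}
    \Pi_\mu(r)=\langle \mu,r\rangle-\langle \nabla C(q),r\rangle-\mathrm{Fee}(q,r)
\end{equation}
is maximized by a trade \(r^\star\) with \(\nabla C(q+r^\star)=\mu\). I would read the formal definition in the appendix as requiring this for every belief, with the maximizer attained at such a trade.

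Since \(\mathrm{Fee}(q,\cdot)\) is differentiable, an interior maximizer satisfies
\begin{equation}
    \nabla_r\mathrm{Fee}(q,r^\star)=\mu-\nabla C(q)=\nabla C(q+r^\star)-\nabla C(q).
\end{equation}
Both sides annihilate \(\mathbf{1}\). On the left this follows from the assumed invariance under riskless overlays; on the right it holds because prices lie in \(\Delta_d\). So the identity is consistent on \(E\).

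Next I would pass from the optimal trades to all trades. I need \(\nabla_r\mathrm{Fee}(q,r)=\nabla C(q+r)-\nabla C(q)\) for every \(r\), not only for those arising as maximizers. Take an arbitrary \(r\) and set \(\mu:=\nabla C(q+r)\), which lies in the price range by definition. Then:
\begin{itemize}
\item By incentive compatibility, some maximizer \(r^\star\) satisfies \(\nabla C(q+r^\star)=\mu=\nabla C(q+r)\).
\item Injectivity of the price map on \(E\), combined with translation invariance (\(\nabla C(x+\alpha\mathbf{1})=\nabla C(x)\)), gives \(r^\star-r\in\operatorname{span}(\mathbf{1})\).
\item The fee is invariant along \(\mathbf{1}\), so \(\nabla_r\mathrm{Fee}(q,\cdot)\) is too. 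Hence \(\nabla_r\mathrm{Fee}(q,r)=\nabla_r\mathrm{Fee}(q,r^\star)=\nabla C(q+r)-\nabla C(q)\).
\end{itemize}
This holds for all \(r\in\mathbb{R}^d\).

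Finally I would integrate along the segment \(t\mapsto tr\), using \(\mathrm{Fee}(q,0)=0\):
\begin{equation}
    \mathrm{Fee}(q,r)=\int_0^1\langle \nabla C(q+tr)-\nabla C(q),r\rangle\,dt = C(q+r)-C(q)-\langle\nabla C(q),r\rangle = D_C(q+r,q).
\end{equation}
This is the claim.

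The main obstacle is the second step: making sure that every trade \(r\) is realized as an incentive-compatible optimal trade for some belief, and that the first-order condition applies there. Two issues need care.
\begin{itemize}
\item \textbf{Boundary of the simplex.} The belief \(\nabla C(q+r)\) must be in the price range where the hypothesis is imposed. This is automatic for that belief, but I would check that the appendix's definition of incentive compatibility quantifies over exactly these beliefs.
\item \textbf{Uniqueness of the maximizer.} Incentive compatibility might only assert that moving to \(\mu\) is \emph{an} optimal trade. That is enough here, because the maximization is unconstrained over \(\mathbb{R}^d\), so the first-order condition holds at any maximizer. The injectivity argument then identifies that maximizer with \(r\) modulo \(\mathbf{1}\).
\end{itemize}
If the definition were instead phrased through an expected-utility inequality, I would first convert it into the maximizer statement above and then proceed as described.
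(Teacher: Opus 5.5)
Your proof is correct and follows the paper's argument. You set \(\mu=\nabla C(q+r)\), use one-shot incentive compatibility together with injectivity of \(\nabla C|_E\) and translation invariance to identify the optimal trade with \(r\) modulo \(\mathbf{1}\), read off \(\nabla_r\mathrm{Fee}(q,r)=\nabla C(q+r)-\nabla C(q)\) from the first-order condition, and integrate from \(\mathrm{Fee}(q,0)=0\); the only cosmetic difference is that the paper works with zero-sum representatives \(r_E\in E\) and extends to all \(r\) by fee invariance at the end, whereas you use that invariance up front to transfer the gradient.
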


The proof is deferred to the appendix.
The proposition formalizes a basic rigidity: under the fixed price rule \(\nabla C\), one-shot incentive compatibility forces the Bregman fee.
Consequently, any mechanism that charges strictly more than the DCFMM fee somewhere must relax one-shot incentive compatibility.
The design goal then becomes preserving market safety and information incorporation while allowing stronger protection for the market maker.

\subsection{A Sufficient Condition for Fee Domination}

\textbf{Dominating the DCFMM fee.}
We next state a general condition under which a replacement fee preserves the standard market-design guarantees.
The fee must dominate the Bregman fee for protection.
It must also remain negligible at infinitesimal scale, so that prices are unchanged.
Finally, it must vary stably with the market state, so that repeated purchases do not become cheaper after the market moves.

\begin{definition}[\(C\)-dominating fee]
\label{def:dominating-fee}
A fee \(\mathrm{Fee}\) is \(C\)-dominating if, for all states \(q\) and every admissible trade \(r\), it satisfies
\begin{align}
    \mathrm{Fee}(q,r)
    &\ge D_C(q+r,q), \label{eq:dominating-protection}\\
    \mathrm{Fee}(q,r)
    &= o(\|r\|)
    \quad \text{as } r\to 0, \label{eq:dominating-price}\\
    \mathrm{Fee}(q,r)-\mathrm{Fee}(q+r,r)
    &\le \langle p_{q+r}-p_q,r\rangle .
    \label{eq:dominating-drift}
\end{align}
\end{definition}

Every trade is admissible for the DCFMM and SQPM; the local-envelope mechanisms below admit only capped trades.
Condition~\eqref{eq:dominating-protection} is the protection condition.
Condition~\eqref{eq:dominating-price} preserves the instantaneous price.
Condition~\eqref{eq:dominating-drift} is the information-incorporation condition: after the market moves in the direction of \(r\), the fee for repeating the same trade cannot fall too much.

\begin{theorem}[Guarantees from dominating fees]
\label{thm:dominating-guarantees}
If \(\mathrm{Fee}\) is \(C\)-dominating, then the corresponding price-plus-fee market has well-defined instantaneous prices, satisfies no arbitrage, preserves expressiveness, and incorporates information.
Moreover, on every trade history, the market maker's collected revenue is at least the revenue of the corresponding DCFMM.
Consequently, the market maker's worst-case loss is no larger than under the DCFMM.
\end{theorem}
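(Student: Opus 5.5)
The plan is to verify each guarantee directly from the three conditions of Definition~\ref{def:dominating-fee}, comparing the market with the DCFMM throughout. The one identity needed is the Bregman decomposition
\[
C(q+r)-C(q)=\langle \nabla C(q),r\rangle + D_C(q+r,q).
\]

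\textbf{Step 1: prices and expressiveness.} From condition~\eqref{eq:dominating-price}, $\mathrm{Pay}(q,r)=\langle\nabla C(q),r\rangle+o(\|r\|)$. Hence the directional derivative of the payment at $r=0$ is $\langle \nabla C(q),\cdot\rangle$, so the instantaneous price is $p_q$, exactly as in the DCFMM. Expressiveness depends only on the set of reachable instantaneous prices. The state moves $q\mapsto q+r$ under the same rule, and prices are $\nabla C$ of the state, so the reachable price set equals that of $C$. If admissibility caps trade size, any target state is still reached by a finite sequence of capped trades.

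\textbf{Step 2: revenue domination and worst-case loss.} Summing condition~\eqref{eq:dominating-protection} along any history $q_0\to q_1\to\cdots\to q_T$ gives
\[
\sum_t \mathrm{Pay}(q_{t-1},r_t)\;\ge\;\sum_t \bigl[C(q_t)-C(q_{t-1})\bigr]=C(q_T)-C(q_0),
\]
since each payment is at least the DCFMM payment for the same trade. The payout obligations $\max_i (q_T-q_0)_i$ are identical in the two markets. Therefore the loss on every history and outcome is no larger than the DCFMM's, and taking the supremum yields the worst-case bound.

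\textbf{Step 3: no arbitrage.} A risk-free profit would be a sequence of trades with total bundle $R=\sum_t r_t$ and total payment $P<\min_i R_i$. By Step 2, $P\ge C(q_0+R)-C(q_0)$. DCFMM no-arbitrage, which follows from convexity and $\nabla C\in\Delta_d$, gives $C(q_0+R)-C(q_0)\ge \langle p_{q_0},R\rangle\ge\min_i R_i$. This is a contradiction.

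\textbf{Step 4: information incorporation.} The requirement is $\mathrm{Pay}(q+r,r)\ge\mathrm{Pay}(q,r)$. Writing out the difference,
\[
\mathrm{Pay}(q+r,r)-\mathrm{Pay}(q,r)=\langle p_{q+r}-p_q,r\rangle+\mathrm{Fee}(q+r,r)-\mathrm{Fee}(q,r),
\]
which is nonnegative by condition~\eqref{eq:dominating-drift}. For capped mechanisms, the statement should be read with $r$ admissible at both $q$ and $q+r$.

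\textbf{Main obstacle.} I expect the main difficulty to lie in matching the appendix's formal definitions rather than in any calculation. The no-arbitrage definition may be stated for a single trade or for sequences, and the sequence version is why I route the argument through the telescoping bound of Step 2. Expressiveness under trade caps needs the multi-trade argument noted in Step 1, and I would check it against the appendix definitions.
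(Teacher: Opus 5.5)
Your proposal is correct and follows the paper's proof essentially step for step: the $o(\|r\|)$ condition fixes the instantaneous price (so expressiveness carries over), summing the protection condition telescopes to $C(q_T)-C(q_0)$ to give revenue domination and the per-history loss comparison, and the drift condition gives information incorporation directly. The only variation is in no arbitrage, where you use $C(q_0+R)-C(q_0)\ge\langle p_{q_0},R\rangle\ge\min_i R_i$ via convexity and $p_{q_0}\in\Delta_d$, while the paper uses $q_T\ge q_0+m\mathbf{1}$ with monotonicity and translation invariance; both are one-line consequences of the standing assumptions.
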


The proof is given in the appendix.
This theorem gives the template for the rest of the paper: a fee can improve market-maker protection without preserving one-shot incentive compatibility.
It only needs to dominate the Bregman fee and satisfy the state-stability condition above.

\subsection{Miscalibration of the Global Quadratic Fee}

\textbf{SQPM as global protection.}
SQPM~\cite{nueve2025smooth} follows the domination principle using a global smoothness bound: if \(C\) is \(L\)-smooth, then
\begin{equation}
    D_C(q+r,q)
    \le
    \frac{L}{2}\|r\|^2.
\end{equation}
Thus the SQPM fee \(\mathrm{Fee}_L(q,r)=\frac{L}{2}\|r\|^2\) dominates the DCFMM fee and gives the market maker additional pathwise protection \cite{nueve2025smooth}.

The limitation is calibration.
The SQPM fee depends only on the size of the trade \(r\), not on the current market state \(q\) or belief \(p_q\).
In confident markets, some trades add little incremental payoff risk, so the Bregman fee can become small, but the global quadratic fee does not respond to this reduction in local risk.

\begin{proposition}[Unbounded overcharge of the global quadratic fee]
\label{prop:sqpm-unbounded-overcharge}
For the softmax cost function, there exist fixed nonzero trades \(r\) such that the overcharge ratio is unbounded: for every \(M>0\), there is a market state \(q\) at which
\begin{equation}
    \frac{\mathrm{Fee}_L(q,r)}{D_C(q+r,q)}
    >
    M .
\end{equation}
\end{proposition}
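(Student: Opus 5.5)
Fix \(d\ge 2\) and take the nonzero zero-sum trade \(r=e_1-e_2\) (any \(r\) not parallel to \(\mathbf{1}\) should work the same way). The numerator \(\mathrm{Fee}_L(q,r)=\frac{L}{2}\|r\|^2=L\) does not depend on \(q\). So it suffices to build a family of states along which \(D_C(q+r,q)\to 0\). The ratio then exceeds any \(M\) once the divergence falls below \(L/M\).

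Consider the states \(q_t=t\,e_3\) when \(d\ge 3\), or \(q_t=t(e_1-e_2)\) when \(d=2\), with \(t\to\infty\). I will work with the \(d\ge3\) case and handle \(d=2\) separately at the end.

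The divergence has the closed form
\[
D_C(q+r,q)=C(q+r)-C(q)-\langle p_q,r\rangle .
\]
For softmax,
\[
C(q+r)-C(q)=\frac1L\log\sum_i (p_q)_i e^{Lr_i},
\]
so
\[
D_C(q+r,q)=\frac1L\log \mathbb{E}_{p_q}\!\left[e^{Lr}\right]-\mathbb{E}_{p_q}[r],
\]
which is \(\frac1L\) times a cumulant generating function minus its mean.

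At \(q_t=t e_3\), the price \(p_{q_t}\) converges to \(e_3\) and \(r_3=0\). Hence \(\mathbb{E}_{p}[e^{Lr}]\to e^{0}=1\) and \(\mathbb{E}_p[r]\to 0\). By continuity of \(\log\), \(D_C(q_t+r,q_t)\to 0\), while the numerator stays at \(L\). The payoff of \(r\) becomes nearly deterministic under the belief, which matches the variance intuition \(\langle r,\nabla^2C\,r\rangle=L\operatorname{Var}_{p}(r)\) from the preliminaries. Strict positivity of \(D_C\) for \(r\notin\mathrm{span}(\mathbf 1)\), by strict convexity of softmax on \(E\), makes the ratio well defined.

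The one real subtlety is \(d=2\), where \(r=e_1-e_2\) cannot be made nearly deterministic. There every nonconstant \(r\) takes different values on the only two outcomes. Pushing the price toward \(e_1\) still works: \(\mathbb{E}_p[e^{Lr}]\to e^{Lr_1}\) and \(\mathbb{E}_p[r]\to r_1\), so \(D_C\to \frac1L\log e^{Lr_1}-r_1=0\).

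More generally, concentrating \(p_q\) on any single outcome \(i\) makes \(D_C\to \frac1L\cdot Lr_i - r_i=0\). This works for every \(r\) and every \(d\), so I will present the argument in that unified form. Explicitly, take \(q_t=t e_1\) and let \(t\to\infty\). The divergence vanishes because \(\log\mathbb{E}_p e^{Lr}\) and \(L\,\mathbb{E}_p r\) share the same limit \(Lr_1\), and dominated limits are trivial since the sums are finite.

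I expect no serious obstacle. The only points needing care are confirming that \(D_C\) stays strictly positive along the sequence and remarking that the limiting price lies on the boundary of the simplex but is never attained.
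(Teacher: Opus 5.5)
Your proposal is correct and matches the paper's proof essentially step for step. Both write $D_C(q+r,q)=\frac1L\log\sum_i (p_q)_i e^{Lr_i}-\langle p_q,r\rangle$, send the state to $n\delta_k$ so the price concentrates on a vertex and the divergence tends to $\frac1L\log e^{Lr_k}-r_k=0$ while $\frac L2\|r\|^2$ stays fixed, and check strict positivity along the sequence (the paper via strict Jensen under full support, you via strict convexity on $E$, which is the same fact); as you note yourself, the $q_t=te_3$ detour and the separate $d=2$ case are unnecessary, since the unified vertex argument covers every nonconstant $r$ and every $d\ge2$.
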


The proof is deferred to the appendix.
Intuitively, consider a market that is already almost certain about one outcome, assigning nearly all probability to it and almost none to the rest.
For a trade whose payoff is already nearly determined under that belief, there is almost no risk left to price: the trade's payoff barely varies across the few outcomes still in doubt.
Recall from \emph{Preliminaries} that the Bregman fee tracks exactly this payoff variance, so it shrinks toward zero as the market approaches certainty.
The SQPM fee, however, depends only on the trade's size \(\|r\|\), not on how confident the market is, so it does not shrink at all.
This is why SQPM can overcharge by an arbitrarily large factor precisely in the states where the market is most confident.

\subsection{Local-Curvature Envelopes}

\textbf{From global smoothness to local curvature.}
This subsection builds a local-curvature envelope in three steps: replace SQPM's global curvature bound with the local curvature at the trade's starting state; control how much that curvature can drift during a single trade; and use this control to bound the true risk by the local curvature, within a known factor.

\begin{figure}[t]
\centering
\includegraphics[width=0.54\columnwidth]{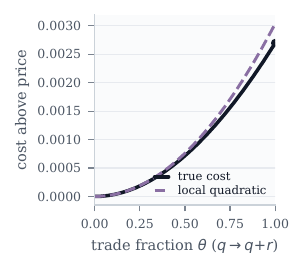}
\caption{How curvature prices risk (details in the appendix). Solid: exact cost above the current price. Dashed: the local-curvature estimate at the trade's starting state. Their close match motivates a local envelope; SpanPM further scales it by realized trade spread.}
\label{fig:curvature-stability}
\end{figure}

Recall from \emph{Preliminaries} that the Bregman fee admits a curvature integral representation: it is the accumulated curvature encountered along the trade segment from \(q\) to \(q+r\).
SQPM upper-bounds this integral by a global quadratic term; a local envelope instead uses the curvature at the current state \(q\).
This is not automatically safe, however: because curvature may increase as the state moves from \(q\) to \(q+r\), a purely local fee proportional to \(\langle r,\nabla^2 C(q)r\rangle\) could understate the risk encountered along the way.
We therefore impose a curvature-stability condition to control how much the Hessian can change during a single trade.
This multiplicative control of Hessian variation is closely related to generalized self-concordance, which supports local and global analyses of Newton-type methods~\cite{sun2019generalized}.

\begin{definition}[Curvature stability]
\label{def:curvature-stability}
Let \(s:\mathbb{R}^d\to\mathbb{R}_+\) be a trade-size gauge, i.e., a nonnegative, positively homogeneous function with \(s(\theta u)=\theta\,s(u)\) for all \(u\in\mathbb{R}^d\) and \(\theta\ge 0\).
A twice differentiable cost function \(C\) satisfies \(\kappa\)-curvature stability with respect to \(s\) if, for all states \(q\) and directions \(u\),
\begin{equation}
    e^{-\kappa s(u)}\nabla^2 C(q)
    \preceq
    \nabla^2 C(q+u)
    \preceq
    e^{\kappa s(u)}\nabla^2 C(q),
\end{equation}
where \(\preceq\) denotes the Loewner order on symmetric matrices.
\end{definition}

Intuitively, this condition rules out cost functions whose local risk profile can swing wildly within a single trade: the curvature measured at the start of a trade stays a faithful description of the curvature encountered throughout, up to a controlled multiplicative factor \(e^{\pm\kappa s(u)}\).
For softmax, we use the spread gauge \(s(r)=\max_i r_i-\min_i r_i\), which ignores riskless shifts in the \(\mathbf{1}\) direction; the softmax cost with parameter \(L\) satisfies curvature stability under this gauge with \(\kappa=2L\) (verified in the appendix).

Curvature stability lets us compare the Bregman fee with the local quadratic form.
Define
\begin{equation}
    \omega^+(a)=\frac{e^a-1-a}{a^2},
    \quad
    \omega^-(a)=\frac{e^{-a}-1+a}{a^2},
\end{equation}
with \(\omega^+(0)=\omega^-(0)=1/2\), so both recover the plain quadratic form when curvature does not change at all along the trade; away from \(a=0\), \(\omega^+\) and \(\omega^-\) bound how far the true, integrated risk can drift above or below that quadratic estimate.
For \(a=\kappa s(r)\), curvature stability implies
\begin{equation}
    \begin{aligned}
        \omega^-(a)\langle r,\nabla^2 C(q)r\rangle
        &\le D_C(q+r,q),\\
        D_C(q+r,q)
        &\le \omega^+(a)\langle r,\nabla^2 C(q)r\rangle .
    \end{aligned}
\end{equation}
This local two-sided bound, proved in the appendix, is the technical bridge from the Bregman fee to a curvature-adaptive envelope.

\textbf{A fixed-envelope baseline.}
We impose a per-trade cap \(s(r)\le\tau\) and define \(A=\kappa\tau\); we call a trade \(r\) admissible if \(s(r)\le\tau\).
The market enforces this cap directly, as made precise by the constrained trader optimization in the next section.
The fixed-envelope local fee is
\begin{equation}
    \mathrm{Fee}_A(q,r)
    =
    \omega^+(A)\langle r,\nabla^2 C(q)r\rangle .
\end{equation}
The multiplier \(\omega^+(A)\) compensates for the largest curvature increase that can occur along any admissible trade, so the fee is local in the current market state, but still large enough to dominate the exact Bregman fee (Figure~\ref{fig:curvature-stability}).

Due to the page limit, we prove in the appendix that this fixed-envelope fee satisfies \emph{safety and bounded overcharge} (Theorem~\ref{thm:ca-safety-overcharge}): it always dominates the Bregman fee, protecting the market maker, and its excess charge is bounded by a constant factor \(\rho(A)\approx 1.83\) that does not depend on the market state; unlike SQPM, whose overcharge is unbounded (Figure~\ref{fig:overcharge-simplex}, in the appendix).
We also prove \emph{information incorporation} (Theorem~\ref{thm:ca-information-incorporation}): restricting trades to \(A\le A^\star\) ensures that the local fee does not fall too quickly after a trade.

\subsection{SpanPM: Pricing the Realized Trade Span}

The fixed-envelope multiplier \(\omega^+(A)\) is chosen for a boundary trade even when the realized trade is much smaller. SpanPM removes this slack. Under a hard cap \(s(r)\le\bar\tau\), let \(a(r)=\kappa s(r)\) and define
\begin{equation}
 \mathrm{Fee}_{\mathrm{span}}(q,r)
 =\omega^+(a(r))\langle r,\nabla^2C(q)r\rangle .
 \label{eq:span-fee}
\end{equation}
The cap remains an enforceable safety envelope, but it no longer determines the multiplier of every interior trade.

\begin{theorem}[Trade-wise guarantees]
\label{thm:span-guarantees}
Suppose \(C\) is \(\kappa\)-curvature stable and \(\bar A=\kappa\bar\tau\le A^\star\). For every admissible trade, write \(D=D_C(q+r,q)\). Then
\begin{equation}
 D\le \mathrm{Fee}_{\mathrm{span}}(q,r)
 \le \rho(a(r))D\le \rho(\bar A)D.
\end{equation}
Moreover, the fee is \(C\)-dominating. Hence SpanPM preserves instantaneous prices, no arbitrage, expressiveness, information incorporation, and worst-case loss no larger than the corresponding DCFMM.
\end{theorem}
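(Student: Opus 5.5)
The plan is to treat each admissible trade \(r\) separately, setting \(a=a(r)=\kappa s(r)\le \bar A\), and to use the two-sided local bound from the Local-Curvature Envelopes subsection with this realized \(a\) rather than the cap. Write \(H_q[r]=\langle r,\nabla^2 C(q)r\rangle\). I will take \(\rho(a)=\omega^+(a)/\omega^-(a)\); this is the natural reading of the fixed-envelope overcharge constant in Theorem~\ref{thm:ca-safety-overcharge}, but the definition must be checked against the appendix.

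\textbf{Step 1 (sandwich).} Curvature stability along the segment gives \(\omega^-(a)H_q[r]\le D\le \omega^+(a)H_q[r]\). The right-hand inequality is exactly \(D\le \mathrm{Fee}_{\mathrm{span}}(q,r)\). The left-hand inequality gives \(H_q[r]\le D/\omega^-(a)\), hence \(\mathrm{Fee}_{\mathrm{span}}\le \rho(a)D\). For the last inequality \(\rho(a(r))D\le\rho(\bar A)D\), I will show that \(\rho\) is nondecreasing on \([0,\infty)\). The route is to write \(\omega^\pm(a)=\int_0^1(1-\theta)e^{\pm a\theta}\,d\theta\), so that
\[
\rho(a)=\frac{\int_0^1(1-\theta)e^{a\theta}\,d\theta}{\int_0^1(1-\theta)e^{-a\theta}\,d\theta}.
\]
The numerator is increasing in \(a\) and the denominator is decreasing, so the ratio is increasing. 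This step also covers the trivial case \(D=0\).

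\textbf{Step 2 (\(C\)-domination).} Condition~\eqref{eq:dominating-protection} is Step 1. For condition~\eqref{eq:dominating-price}, use that \(\omega^+\) is increasing and \(a(r)\le\bar A\). This gives \(\mathrm{Fee}_{\mathrm{span}}(q,r)\le \omega^+(\bar A)\,\|\nabla^2C(q)\|\,\|r\|^2=O(\|r\|^2)\), which is \(o(\|r\|)\).

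For the drift condition~\eqref{eq:dominating-drift}, note that the same trade \(r\) has the same span at both \(q\) and \(q+r\), so both fees carry the same multiplier \(\omega^+(a)\). Curvature stability gives \(H_{q+r}[r]\ge e^{-a}H_q[r]\), and therefore
\[
\mathrm{Fee}_{\mathrm{span}}(q,r)-\mathrm{Fee}_{\mathrm{span}}(q+r,r)\le \omega^+(a)(1-e^{-a})H_q[r].
\]
On the other side, \(\langle p_{q+r}-p_q,r\rangle=\int_0^1 H_{q+\theta r}[r]\,d\theta\). Using \(s(\theta r)=\theta s(r)\), the integrand is at least \(e^{-a\theta}H_q[r]\), so the right-hand side is at least \(\frac{1-e^{-a}}{a}H_q[r]\). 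It therefore suffices to have \(a\,\omega^+(a)\le 1\), i.e.\ \(e^a\le 1+2a\). The left side minus the right side is convex in \(a\) and vanishes at \(0\), so the set of \(a\ge0\) where this holds is an interval \([0,a_0]\).

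\textbf{Step 3 (conclusion).} Theorem~\ref{thm:dominating-guarantees} then yields prices, no arbitrage, expressiveness, information incorporation, and worst-case loss no larger than under the DCFMM.

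\textbf{Main obstacle.} The hard part is matching the threshold \(A^\star\) to this argument. If \(A^\star\) in Theorem~\ref{thm:ca-information-incorporation} is defined as the root of \(e^a=1+2a\), the interval property above finishes the proof, since \(a(r)\le\bar A\le A^\star\). If \(A^\star\) is defined differently, I would instead reduce to Theorem~\ref{thm:ca-information-incorporation} itself: a trade with realized span \(s(r)\) is a boundary-admissible trade for the fixed-envelope market with cap \(\tau=s(r)\) and \(A=a(r)\le A^\star\). In that market \(\mathrm{Fee}_A\) coincides with \(\mathrm{Fee}_{\mathrm{span}}\) at both \(q\) and \(q+r\), so the drift inequality transfers verbatim.
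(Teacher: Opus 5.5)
Your proof is correct and follows the paper's argument: the sandwich from the local two-sided bound at the realized $a(r)$, monotonicity of $\rho=\omega^+/\omega^-$, and the drift inequality obtained by rerunning the fixed-envelope argument with $A$ replaced by $a(r)$ (this is exactly the paper's reduction, and $A^\star$ is indeed the positive root of $e^A=1+2A$), followed by Theorem~\ref{thm:dominating-guarantees}. The only cosmetic difference is that you obtain $o(\|r\|)$ from the bound $\omega^+(a(r))\le\omega^+(\bar A)$ rather than from the paper's expansion $\omega^+(a)=1/2+O(\|r\|)$; both work equally well.
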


The key distinction is that the first factor is realized: \(\rho(a(r))\to1\) as \(s(r)\to0\). For the drift condition, the same trade \(r\) has the same multiplier before and after the state change, so the fixed-envelope argument applies with \(A\) replaced by \(a(r)\). Full proofs are in Appendix~\ref{app:fee-design}.

\section{Price Discovery Dynamics}
\label{sec:dynamics}

The previous section showed that SpanPM's fee is safe and risk-calibrated.
This section asks how it affects learning in the market.
Once one-shot incentive compatibility is relaxed, prices do not generally jump to a trader's belief in one trade.
Instead, information is incorporated through a sequence of trades, and we show that the fee determines the geometry of this sequence.
SQPM induces gradient-like dynamics, taking a step of fixed size at every state.
The fixed-envelope baseline induces damped Newton-like dynamics. SpanPM additionally removes the fixed damping as trades shrink, producing full Newton behavior near consensus.

\subsection{Fees as Trading Geometry}

\textbf{Trader best responses.}
Consider a sequence of traders with a common belief \(\mu\in\Delta_d\).
At state \(q\), a trader chooses a trade \(r\) to maximize expected profit; for a price-plus-fee market, this is equivalent to
\begin{align}
    &\max_r
    \left\{
        \langle \mu,r\rangle
        -
        \mathrm{Pay}(q,r)
    \right\}
    \notag\\
    &\quad\Longleftrightarrow\quad
    \min_r
    \left\{
        \langle \nabla C(q)-\mu,r\rangle
        +
        \mathrm{Fee}(q,r)
    \right\}.
\end{align}
This expression links market design to optimization: the linear term points in the direction of disagreement between the current market price and the trader's belief, while the fee regularizes the size and geometry of the trade.
In words, the trader wants to push the price toward their own belief, but the fee makes doing so more expensive as the push grows larger or riskier, so their optimal trade balances the perceived pricing error against the cost of correcting it.
Changing the fee therefore changes the optimization method that traders collectively implement.

\subsection{From Damped to Full Newton Updates}

\textbf{The fixed-envelope fee induces damped Newton dynamics.}
Both mechanisms can be described as optimizing the same convex potential, \(\Phi_\mu(q)=C(q)-\langle \mu,q\rangle\), whose gradient is \(\nabla C(q)-\mu\).
SQPM makes traders collectively implement steepest gradient descent on this potential~\cite{nueve2025smooth}: a fixed-size step that uses the same global Euclidean geometry at every market state.
The local-curvature fee replaces this global Euclidean geometry with the Hessian geometry of the cost function.
The trader solves
\begin{equation}
    \min_{r:\,s(r)\le\tau}
    \left\{
        \langle p_q-\mu,r\rangle
        +
        \omega^+(A)\langle r,\Hq r\rangle
    \right\}.
\end{equation}
This is a local quadratic model of \(\Phi_\mu\), measured in the curvature of \(C\) at the current state.
When the cap is inactive, the first-order condition gives \(2\omega^+(A)\nabla^2 C(q)r=\mu-\nabla C(q)\); thus, up to damping and the singular riskless direction, the update is a Newton step:
\begin{equation}
    r
    =
    \frac{1}{2\omega^+(A)}
    \nabla^2 C(q)^{\dagger}
    \bigl(\mu-\nabla C(q)\bigr),
\end{equation}
where \(\nabla^2 C(q)^{\dagger}\) denotes the Moore--Penrose pseudoinverse on the non-riskless subspace.

The difference from SQPM is preconditioning.
SQPM uses the same fixed step size at every market state, while the local-curvature fee adapts the step to the market's geometry.
When the market is confident and curvature is small, the local-curvature rule takes a larger step, whereas SQPM's fee still restrains the trade as if the market were uncertain.
This adaptive step size leads to faster price discovery, which we show formally below.

\subsection{Convergence Guarantees}

\textbf{Trust-region stability.}
Local Newton-like preconditioning is fast once the market is close to consensus, but it can be unreliable far from consensus: the local curvature model may not describe the cost function well over a large step.
The trade cap \(s(r)\le\tau\) controls this risk, restricting each trade to a region where curvature stability keeps the local model accurate.
This gives the cap a trust-region interpretation: the local quadratic model is used only where it remains reliable~\cite{conn2000trust}.
Close to the target belief, the cap becomes inactive and the update behaves like a damped Newton step, so the same cap that ensures the market-design guarantees also stabilizes price discovery.

\begin{theorem}[Price discovery under the fixed-envelope fee]
\label{thm:ca-price-discovery}
Under the regularity assumptions in Appendix~\ref{app:preliminaries}, assume that \(C\) satisfies curvature stability, \(\tau>0\), and \(A=\kappa\tau\le A^\star\), where \(A^\star\) is the positive solution of \(e^A=1+2A\).
Assume also that the gauge is invariant to riskless shifts and restricts to a norm on \(E\).
Fix a belief \(\mu\in\operatorname{relint}(\Delta_d)\).
If a sequence of traders with belief \(\mu\) repeatedly maximizes expected profit under the fixed-envelope fee, then the induced prices \(p_t=\nabla C(q_t)\) satisfy
\begin{equation}
    p_t\to \mu .
\end{equation}
Moreover, once the iterates enter the local regime where the cap is inactive, the price error contracts geometrically:
\begin{equation}
    \|p_{t+1}-\mu\|
    \le
    \gamma_A
    \|p_t-\mu\|
    +
    O\bigl(\|p_t-\mu\|^2\bigr),
\end{equation}
where \(\gamma_A=|1-\eta_A|<1\) and \(\eta_A=1/(2\omega^+(A))\).
\end{theorem}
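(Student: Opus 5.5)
The proof has two parts. Global convergence uses the potential \(\Phi_\mu(q)=C(q)-\langle\mu,q\rangle\) as a Lyapunov function. The local rate comes from a Taylor expansion of the price map around the consensus state.

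\textbf{Step 1 (descent).} Let \(r_t\) be the trader's optimal capped response at \(q_t\). The trader objective \(\langle p_{q_t}-\mu,r\rangle+\omega^+(A)\langle r,\nabla^2C(q_t)r\rangle\) is at most its value at \(r=0\), which is \(0\). By the two-sided bound from curvature stability, the upper bound \(D_C(q+r,q)\le\omega^+(A)\langle r,\nabla^2C(q)r\rangle\) holds on admissible trades. Hence
\[
\Phi_\mu(q_t+r_t)-\Phi_\mu(q_t)=\langle p_{q_t}-\mu,r_t\rangle+D_C(q_t+r_t,q_t)\le \mathrm{opt}_t\le 0 .
\]
So \(\Phi_\mu\) is monotone nonincreasing along the iterates.

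Since \(\mu\in\operatorname{relint}(\Delta_d)\) lies in the price range, \(\Phi_\mu\) has a minimizer \(q^\ast\) with \(\nabla C(q^\ast)=\mu\). The minimizer is unique modulo \(\mathbf 1\) because the price map is injective on \(E\). I will use the regularity assumptions (strict convexity on \(E\) with coercive sublevel sets) to argue that the sublevel set of \(\Phi_\mu\) restricted to \(E\) is compact. Therefore \(q_t\) stays in a compact set, where the Hessian on \(E\) is bounded above and below by \(\lambda_{\max}\ge\lambda_{\min}>0\).

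\textbf{Step 2 (sufficient decrease).} I will show \(-\mathrm{opt}_t\ge c\min\{g_t^2,g_t\}\), where \(g_t=\|p_{q_t}-\mu\|\). To do this, compare with the feasible candidate \(r=-\beta\,\nabla^2C(q_t)^\dagger(p_{q_t}-\mu)\). Choose \(\beta\) as large as \(1/(2\omega^+(A))\) allows, scaled down if needed to meet \(s(r)\le\tau\). This uses that the gauge is a norm on \(E\), so it is equivalent to \(\|\cdot\|\), together with the Hessian bounds on the compact set. Summing the decreases gives \(\sum_t\min\{g_t^2,g_t\}<\infty\), so \(g_t\to0\), that is \(p_t\to\mu\). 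I expect this step to be the main obstacle: obtaining uniform constants for the trust-region case needs compactness, which in turn needs coercivity of \(\Phi_\mu\) on \(E\) from the appendix assumptions.

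\textbf{Step 3 (local rate).} Once \(g_t\) is small, the unconstrained minimizer \(r_t=\eta_A\nabla^2C(q_t)^\dagger(\mu-p_t)\) has spread \(O(g_t)<\tau\), so the cap is inactive. Taylor-expanding the price map gives
\[
p_{t+1}=p_t+\nabla^2C(q_t)r_t+O(\|r_t\|^2).
\]
The pseudoinverse acts on \(E\), and \(\mu-p_t\in E\), so \(\nabla^2C(q_t)\nabla^2C(q_t)^\dagger(\mu-p_t)=\mu-p_t\). Hence \(p_{t+1}-\mu=(1-\eta_A)(p_t-\mu)+O(g_t^2)\), which is the claimed contraction.

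It remains to check \(\gamma_A<1\), i.e.\ \(0<\eta_A<2\). Since \(\omega^+\) is increasing with \(\omega^+(0)=1/2\), we get \(\eta_A\le1\). Also \(A\le A^\star\) gives \(e^A-1-A\le A\), so \(\omega^+(A)\le 1/A\). This keeps \(\eta_A\) strictly positive, and the rate follows.
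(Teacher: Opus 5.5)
Your proof is correct. Steps 1 and 3 coincide with the paper's: descent from the safety bound $D_C(q+r,q)\le\omega^+(A)\langle r,\nabla^2C(q)r\rangle$, compactness of the sublevel set $K$ from coercivity on $E$, and the expansion $e_{t+1}=(1-\eta_A)e_t+O(\|e_t\|^2)$ with $e_t=p_t-\mu$.

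Your global-convergence step takes a different route. The paper argues qualitatively:
\begin{itemize}
\item Positive definiteness of $\nabla^2C(q)$ on $E$ makes the capped response $r(q)$ the unique minimizer of a strictly convex problem on a fixed compact set, hence continuous in $q$.
\item So the decrease $\Phi_\mu(q)-\Phi_\mu(q+r(q))$ is continuous and positive on $K\setminus\{q^\star\}$, and has a positive infimum away from each neighborhood of $q^\star$.
\item Boundedness below of $\Phi_\mu$ then forces $q_t\to q^\star$.
\end{itemize}
You instead evaluate the truncated Newton candidate $-\beta\nabla^2C(q_t)^\dagger e_t$ with $\beta=\min\{\eta_A,\tau/s(\nabla^2C(q_t)^\dagger e_t)\}$. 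Its value is $-\beta(1-\omega^+(A)\beta)\,e_t^\top\nabla^2C(q_t)^\dagger e_t$. Combined with norm equivalence of $s$ on $E$ and eigenvalue bounds on $K$, this gives $-\mathrm{opt}_t\ge c\min\{g_t^2,g_t\}$.

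Your route costs a little more bookkeeping but buys two things. It gives an explicit bound on the number of rounds with $g_t\ge\varepsilon$. It also does not rely on uniqueness or continuity of the best response, so the global phase covers any response at least as good as the candidate. This is the same style of argument the paper itself uses for SpanPM, where responses need not be unique.

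A minor point: $\eta_A>0$ holds simply because $\omega^+(A)$ is finite, so your appeal to $A\le A^\star$ there is unnecessary. Neither argument actually uses that hypothesis for convergence; it serves information incorporation.
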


The proof is deferred to the appendix.
The theorem gives two guarantees for the fixed-envelope baseline: prices converge, then contract geometrically at the damping-dependent rate \(\gamma_A\).

SpanPM's trader objective is generally nonconvex because its multiplier depends on \(s(r)\). Nevertheless, after restricting \(r\) to zero-sum representatives in \(E\), a global response exists on the compact capped set, and the problem admits a useful reduction:
\begin{equation}
 \min_{0\le u\le\bar\tau}\;\min_{s(r)\le u}
 \{\langle p_q-\mu,r\rangle+\omega^+(\kappa u)\langle r,\Hq r\rangle\}.
 \label{eq:outer-span}
\end{equation}
For fixed \(u\), the inner problem is convex; for softmax it is solved exactly by weighted clipping. At a nonzero joint optimum one can take \(u=s(r)\), since reducing slack in \(u\) weakly lowers the multiplier. Thus only a bounded one-dimensional outer search remains.

\begin{theorem}[Price discovery under SpanPM]
\label{thm:span-price-discovery}
Under the regularity assumptions of Theorem~\ref{thm:ca-price-discovery}, let \(\kappa\bar\tau\le A^\star\). If traders with a common interior belief choose global best responses under SpanPM, then \(p_t\to\mu\). Once responses are interior,
\begin{equation}
 \|p_{t+1}-\mu\|=O(\|p_t-\mu\|^2).
\end{equation}
\end{theorem}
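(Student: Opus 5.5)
The proof has two parts, mirroring Theorem~\ref{thm:ca-price-discovery}: a global descent argument on the potential \(\Phi_\mu(q)=C(q)-\langle\mu,q\rangle\) giving \(p_t\to\mu\), and a local argument showing that interior responses are Newton steps with a vanishing damping defect. Throughout we restrict states and trades to zero-sum representatives in \(E\). On \(E\), \(\Phi_\mu\) is strictly convex because the price map is injective there. Since \(\mu\in\operatorname{relint}(\Delta_d)\) lies in the price range, \(\Phi_\mu\) has a unique minimizer \(q^\star\) with \(p_{q^\star}=\mu\). Coercivity of \(\Phi_\mu\) on \(E\) makes its sublevel sets compact.

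\textbf{Global convergence.} Let \(r_t\) be the global best response at \(q_t\). Since \(r=0\) is feasible, the trader objective satisfies \(\langle p_{q_t}-\mu,r_t\rangle+\mathrm{Fee}_{\mathrm{span}}(q_t,r_t)\le 0\). By Theorem~\ref{thm:span-guarantees}, \(\mathrm{Fee}_{\mathrm{span}}\ge D_C(q_t+r_t,q_t)\), and therefore
\[
\Phi_\mu(q_{t+1})-\Phi_\mu(q_t)=\langle p_{q_t}-\mu,r_t\rangle+D_C(q_t+r_t,q_t)\le \langle p_{q_t}-\mu,r_t\rangle+\mathrm{Fee}_{\mathrm{span}}(q_t,r_t)=:-\delta(q_t)\le 0 .
\]
So \(\Phi_\mu\) is monotone and the iterates remain in a compact sublevel set \(K\). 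The key quantity is the optimal decrease \(\delta(q)\). It is nonnegative, and it is strictly positive whenever \(p_q\ne\mu\): take \(r=\epsilon v\) with \(v\in E\) and \(\langle p_q-\mu,v\rangle<0\); for small \(\epsilon\) the linear term dominates the \(O(\epsilon^2)\) fee. Moreover \(\delta\) is lower semicontinuous in \(q\), because it is the negative of a minimum over a fixed compact set of a function jointly continuous in \((q,r)\), and such a minimum is continuous. Summing the decreases gives \(\sum_t\delta(q_t)<\infty\). If some subsequence of \(q_t\) had a limit point \(\bar q\ne q^\star\) in \(K\), then \(\delta\) would be bounded away from zero near \(\bar q\), contradicting summability. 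Hence \(q_t\to q^\star\) and \(p_t\to\mu\).

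\textbf{Local quadratic rate.} Once \(r_t\) is interior (\(s(r_t)<\bar\tau\)) and nonzero, we work with the outer reduction \eqref{eq:outer-span} at \(u=s(r_t)\). I expect this step to be the main obstacle, because \(s\) is nondifferentiable, so a first-order condition on the full objective is not available. The plan is to avoid differentiating the multiplier altogether. At the optimum, \(r_t\) solves the convex inner problem with multiplier \(c_t=\omega^+(\kappa u_t)\), where \(u_t=s(r_t)\). If the inner constraint \(s(r)\le u_t\) were slack, we would have \(r_t=\tfrac{1}{2c_t}H_t^\dagger(\mu-p_t)\) with \(H_t=\nabla^2C(q_t)\). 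If it binds, a KKT argument in the gauge's subdifferential, combined with the observation that lowering \(u\) lowers the multiplier, shows that the optimal trade differs from this formula by at most \(O(u_t\|r_t\|)\). This follows because the derivative of \(\omega^+\) is bounded, so perturbing \(u\) only perturbs the quadratic coefficient to first order. As a fallback, compare objective values directly: evaluating the objective at the unconstrained Newton step with multiplier \(\tfrac12\), and using strong convexity of the quadratic model on \(E\) (valid since \(H_t\) is uniformly positive definite on \(E\) over \(K\)), yields \(\|r_t-r_t^N\|=O(\|p_t-\mu\|^2)\), where \(r_t^N=H_t^\dagger(\mu-p_t)\) is the full Newton step.

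In either route, the ingredients are the following. The trade size satisfies \(\|r_t\|=O(\|p_t-\mu\|)\), because the gauge is a norm on \(E\) and \(H_t^\dagger\) is bounded on \(K\). The multiplier satisfies \(\omega^+(\kappa s(r_t))=\tfrac12+O(s(r_t))\). Together these give \(r_t=H_t^\dagger(\mu-p_t)+O(\|p_t-\mu\|^2)\). Finally, a Taylor expansion gives \(p_{t+1}=p_t+H_tr_t+O(\|r_t\|^2)\), using Lipschitz continuity of the Hessian on \(K\) from the regularity assumptions. Substituting the expression for \(r_t\) yields \(p_{t+1}-\mu=O(\|p_t-\mu\|^2)\). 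This is the fixed-envelope contraction of Theorem~\ref{thm:ca-price-discovery} with \(\eta\) replaced by \(1/(2\omega^+(a(r_t)))\), whose linear term \(|1-\eta|\) now vanishes like \(O(\|p_t-\mu\|)\).
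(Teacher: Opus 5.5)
Your global argument matches the paper's. Fee domination makes $\Phi_\mu$ a Lyapunov function. The paper gets a uniform decrease from the explicit candidate $r=-\delta e(q)$; you get it from continuity of the optimal decrease over the fixed compact cap set. Both work. The local step is where you depart. The paper writes $\mathrm{Fee}_{\mathrm{span}}(q,r)=\tfrac12 r^\top\nabla^2C(q)r+R_q(r)$, where $R_q=O(\|r\|^3)$ \emph{and} its Clarke subgradients satisfy $\partial R_q=O(\|r\|^2)$, and reads a perturbed Newton step off Clarke stationarity. You instead use either KKT for the convex inner problem of the nested reduction, or a value comparison. Both routes can be completed, but as written neither establishes the order-two estimate, which is the step you flagged.

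Write $e=p_t-\mu$, $H=H_t$, $F(r)=\langle e,r\rangle+\omega^+(\kappa s(r))\,r^\top Hr$, and split $F=Q+R$ with $Q(r)=\langle e,r\rangle+\tfrac12 r^\top Hr$.

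\textbf{Fallback route.} From $F(r_t)\le F(r_t^N)$ and strong convexity of $Q$ you get $\tfrac12(r_t-r_t^N)^\top H(r_t-r_t^N)\le R(r_t^N)-R(r_t)$. Bounding the right side only by the size $|R|=O(\|e\|^3)$ yields $\|r_t-r_t^N\|=O(\|e\|^{3/2})$, i.e.\ a rate of order $3/2$, not quadratic. The missing fact is that $R(r)=(\omega^+(\kappa s(r))-\tfrac12)\,r^\top Hr$ is Lipschitz with constant $O(\rho^2)$ on the ball of radius $\rho$ in $E$, because $s$ is Lipschitz on $E$ and $\omega^+$ is $C^1$. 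With $\rho=O(\|e\|)$ this gives $R(r_t^N)-R(r_t)\le O(\|e\|^2)\|r_t-r_t^N\|$, hence $\|r_t-r_t^N\|=O(\|e\|^2)$. This is the paper's bound $\partial R_q=O(\|r\|^2)$ in zeroth-order form. The repaired route is attractive: it needs only feasibility of $r_t^N$, not interiority or stationarity of $r_t$.

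\textbf{KKT route.} The estimate you need is that the multiplier $\lambda$ of the constraint $s(r)\le u_t$ is $O(\|r_t\|^2)$. It comes from \emph{raising} $u$, which is allowed since $u_t<\bar\tau$. Since $s$ is a norm on $E$, the inner value is convex in $u$, with right derivative $-\lambda$ (minimal multiplier). Outer optimality then forces $\lambda\le\kappa\,(\omega^+)'(\kappa u_t)\,r_t^\top Hr_t$. Lowering $u$ is what justifies $u=s(r)$; it does not give this bound.

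\textbf{Trade size.} The bound $\|r_t\|=O(\|e\|)$ does not follow from boundedness of $H_t^\dagger$ alone, since the global response solves a nonconvex problem. Use $F(r_t)\le F(0)=0$, $\omega^+\ge\tfrac12$, and $H\succeq mI$ on $E$ uniformly over $K$; together these give $\|r_t\|\le 2\|e\|/m$.
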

The rate follows from \(\omega^+(\kappa s(r))=1/2+O(\|r\|)\): the optimality condition becomes a full Newton step plus a second-order remainder. Unlike the fixed-envelope baseline, no constant damping remains near consensus. The global convergence and nonsmooth local expansion are proved in the appendix.
Appendix Figure~\ref{fig:price-discovery} illustrates the transition from cap-constrained movement to the predicted local regime.

\section{Experiments}
\label{sec:experiments}

We ask whether SpanPM (i) retains local-risk calibration, (ii) improves repeated aggregation under the \emph{same} hard safety cap as a fixed-envelope local baseline, and (iii) reduces charges when forecasts evolve with new signals.

\textbf{Common setup.}
We compare SpanPM with DCFMM~\cite{abernethy2013efficient}, SQPM~\cite{nueve2025smooth}, and the fixed-envelope baseline FixedEnv using softmax \((L=3)\) and fixed seeds. Both local mechanisms enforce \(\tau_{\max}=0.15\) and hence share the worst-case overcharge bound \(\rho(0.9)\approx1.825\); FixedEnv uses the fee multiplier \(\omega^+(0.9)\) for every trade, while SpanPM uses \(\omega^+(\kappa s(r))\). SpanPM responses combine exact weighted clipping with a 17-point multi-basin search and golden-section refinement, audited against a 1{,}001-point grid.

\subsection{Fee Calibration}
\label{sec:exp-calibration}

For each \(d\in\{2,5,10,20\}\), we draw up to 3{,}000 nondegenerate state--trade pairs from near-uniform through near-vertex prices. SQPM's fee-to-Bregman ratio exceeds 100 near simplex vertices, whereas FixedEnv remains below 1.60. SpanPM is pointwise no larger than FixedEnv and its factor approaches one for small trades. Appendix Table~\ref{tab:calibration} gives the full ratios; audited solver cases preserve domination and the drift inequality to floating-point tolerance.

\subsection{Repeated Forecast Aggregation}
\label{sec:exp-speed}

For each \(d\in\{3,5,10\}\), we run 200 paired trials with a target belief \(\mu\sim\mathrm{Dirichlet}(1.5\mathbf{1})\) and a confident but incorrect initial price.
A sequence of traders sharing \(\mu\) makes 20 numerically computed response trades, and we track \(\lVert p_t-\mu\rVert\).

\begin{figure}[t]
\centering
\includegraphics[width=0.88\columnwidth]{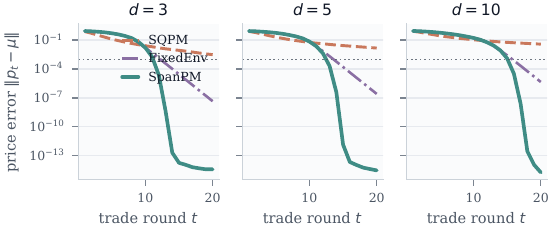}
\caption{Median price error over 200 paired confident-start trials. SpanPM and FixedEnv use the same hard cap and worst-case fee bound; the dotted line is tolerance \(10^{-3}\).}
\label{fig:exp-convergence}
\end{figure}

Figure~\ref{fig:exp-convergence} shows the local mechanisms overtaking SQPM and SpanPM entering its predicted quadratic regime after the cap becomes inactive. At round 20, FixedEnv's median errors are \(4.6\times10^{-8}\), \(2.6\times10^{-7}\), and \(4.5\times10^{-6}\) for \(d=3,5,10\), while all SpanPM medians are below \(10^{-12}\). SpanPM reaches tolerance \(10^{-3}\) in 100\%, 100\%, and 99.5\% (199/200) of trials, respectively, versus 100\%, 100\%, and 97.5\% for FixedEnv and 24.5\%, 0\%, and 0\% for SQPM; its median rounds are 12, 13, and 16. The cap sweep in Appendix~\ref{app:tau-sensitivity} rules out a larger-cap explanation.

\subsection{Evolving Synthetic Beliefs}
\label{sec:exp-informed}

For 100 paired seeds and \(d\in\{5,10\}\), 40 categorical signals update a common Dirichlet belief, and every mechanism replays the same path from the same confident, incorrect price. Final errors are similar because signal uncertainty limits accuracy. SQPM retains only 53--55\% of DCFMM surplus and accumulates \(0.338\)--\(0.406\) excess fee. SpanPM retains 96--97\% of DCFMM surplus, cuts excess fee by 94\% relative to SQPM, and lowers it by a further 10--13\% versus FixedEnv under the same cap. Appendix Table~\ref{tab:informed} reports the full means and metric definitions.

\section{Related Work}
\label{sec:related}

Financial language models produce probabilistic forecasts\newtext{~\cite{araci2019finbert,xie2023pixiu,lin2025cspo}} \newtext{and are benchmarked on multilingual financial question answering~\cite{xie2026finmmevaltask1,xie2026finmmevaltask2}}; SpanPM aggregates such outputs downstream. Prediction markets aggregate dispersed forecasts~\cite{wolfers2004prediction} using scoring-rule and convex-cost designs~\cite{hanson2003combinatorial,abernethy2013efficient,agrawal2011unified}, with expert-aggregation, message-passing, and online-learning interpretations~\cite{storkey2011machine,chen2010noregret,dellapenna2012interpreting}. Their links to proper scoring rules and Bregman divergences are well established~\cite{gneiting2007strictly,abernethy2012characterization,frongillo2018axiomatic}; our narrower rigidity result fixes the price-plus-fee rule.

Protection mechanisms study bounded loss, liquidity sensitivity, adaptive liquidity\newtext{, and recovery of prefunded seed capital through direction-conditioned fees}\newtext{~\cite{chen2007utility,othman2013practical,abernethy2014volume,nueve2026adaptive,chen2026seedcapital}}. SpanPM instead holds liquidity fixed and adapts a local-curvature fee to state and realized trade span, addressing SQPM's global calibration gap~\cite{nueve2025smooth}. Its dynamics complement first-order convergence analyses~\cite{frongillo2015convergence} and are analogous to Online Newton Step~\cite{hazan2007logarithmic}.

\section{Conclusion}

SpanPM aggregates probabilistic financial forecasts in a prediction market by pricing local curvature from each trade's realized spread while retaining a hard safety cap. Its calibration tightens as trades shrink and turns damped Newton aggregation into locally quadratic full-Newton convergence; simulations and search audits support these predictions while retaining near-DCFMM surplus.

\section*{Limitations}

Our guarantees concern smooth convex cost-function markets with curvature stability, a per-trade spread cap, and exact global responses. Experiments use a softmax market and synthetic beliefs, with a common trade-span cap for the two local mechanisms; they do not establish performance for deployed markets or financial NLP encoders. Traders are myopic and risk-neutral, excluding heterogeneous preferences, strategic splitting, latency, adversarial or correlated errors, and approximate or forward-looking responses. For softmax, the nonconvex response is handled by deterministic bounded search and a dense-grid audit, which is numerical rather than symbolic certification. The quadratic rate also assumes a fixed interior belief, nonsingular risky-subspace curvature, and an inactive cap, so evolving or heterogeneous forecasts need not attain it. Real-model evaluation and robust incentives remain future work.

\bibliography{references}

\appendix
% \section{Appendix}

\section{Additional Preliminaries}
\label{app:preliminaries}

This section collects standard technical assumptions, protocol details, and formal guarantee definitions that we omit from the Preliminaries section of the main text for space.

\textbf{Standard assumptions on the cost function.}
Following the standard cost-function market-making framework \cite{abernethy2013efficient} and the price-plus-fee formulation of \cite{nueve2025smooth}, we assume that \(C\) is convex, increasing, translation invariant, and a probability mapping.
Translation invariance means \(C(q+\alpha\mathbf{1})=C(q)+\alpha\) for all \(q\) and \(\alpha\).
The probability-mapping condition means that \(\nabla C(q)\in\Delta_d\) for every state \(q\), and that the closure of the price range \(\{\nabla C(q):q\in\mathbb{R}^d\}\) is the full simplex \(\Delta_d\).
The closure condition gives expressiveness: every belief in the simplex can be approximated by some market state.
For bounded-loss claims, we also assume \(C^*(\delta_i)<\infty\) for every outcome \(i\), where \(C^*\) is the Fenchel conjugate defined below.
The DCFMM's worst-case loss from \(q_0\) is then bounded by
\(\max_i\{C(q_0)-q_{0,i}+C^*(\delta_i)\}\).

Translation invariance also isolates the riskless direction of the market: the bundle \(\alpha\mathbf{1}\) pays exactly \(\alpha\) no matter which outcome occurs, so it should cost exactly \(\alpha\).
Differentiating \(C(q+\alpha\mathbf{1})=C(q)+\alpha\) gives \(\langle\nabla C(q),\mathbf{1}\rangle=1\), and differentiating again gives \(\nabla^2C(q)\mathbf{1}=0\).
Thus the Hessian is singular along \(\mathbf{1}\), and the meaningful local geometry of the market lies on the zero-sum subspace \(E=\{v\in\mathbb{R}^d:\langle \mathbf{1},v\rangle=0\}\).

\textbf{Regularity used in the proofs.}
For results involving inverse prices or Newton steps, we assume that \(C\in C^3\), \(\ker\nabla^2C(q)=\operatorname{span}\{\mathbf{1}\}\), and \(\nabla C|_E\) is a diffeomorphism from \(E\) to \(\operatorname{relint}(\Delta_d)\).
We also assume that \(\Phi_\mu(q)=C(q)-\langle\mu,q\rangle\) is coercive on \(E\) for every \(\mu\in\operatorname{relint}(\Delta_d)\).
These standard Legendre-type conditions hold for the softmax cost.
The price-discovery result additionally requires the trade-size gauge to be invariant under riskless shifts and to restrict to a norm on \(E\), as does the softmax spread gauge.

\textbf{Dual construction.}
Cost functions are often constructed by convex duality.
Let \(f^*(z)=\sup_x\{\langle z,x\rangle-f(x)\}\) denote the Fenchel conjugate.
Starting from a strictly convex regularizer \(\widehat C\) on \(\Delta_d\) or \(\operatorname{relint}(\Delta_d)\), one sets \(C=\widehat C^*\).
Under standard conditions, the resulting price map satisfies
\begin{equation}
    \nabla C(q)
    =
    \argmax_{p\in\operatorname{dom}(\widehat C)}
    \{\langle q,p\rangle-\widehat C(p)\},
\end{equation}
and its price range is dense in \(\Delta_d\) \cite{abernethy2013efficient}.
This duality is what underlies the duality-based cost-function market maker.

\textbf{Trading protocol.}
An automated market maker starts from an initial state \(q_0\).
At round \(t\), a trader requests a bundle \(r_t\), pays \(\mathrm{Pay}(q_t,r_t)\), and the state updates to \(q_{t+1}=q_t+r_t\).
After the outcome \(y\) is realized, the trader receives \(\langle r_t,\rho(y)\rangle\), where \(\rho(y)=\delta_i\) is the payoff vector when outcome \(y=i\) occurs, with \(\delta_i\in\mathbb{R}^d\) the \(i\)-th standard basis vector.
Positive coordinates of a bundle correspond to buying shares, while negative coordinates correspond to selling or shorting shares.

\textbf{Formal market-design guarantees.}
A mechanism should have well-defined instantaneous prices, rule out arbitrage, represent arbitrary beliefs, incorporate information through trading, and bound the market maker's worst-case loss.
No arbitrage means that, for every finite admissible history \((r_t)_{t=0}^{T-1}\), some outcome \(i\) satisfies \(\sum_t\mathrm{Pay}(q_t,r_t)\ge\sum_t(r_t)_i\).
Expressiveness means that every belief \(p\in\Delta_d\) can be approximated by some instantaneous price.
Information incorporation means that buying the same bundle twice should not become cheaper, i.e., \(\mathrm{Pay}(q+r,r)\ge \mathrm{Pay}(q,r)\).

The strongest incentive notion is one-shot incentive compatibility.
A trader with belief \(\mu\in\Delta_d\) chooses a trade maximizing expected profit,
\begin{equation}
    \max_r
    \left\{
        \langle \mu,r\rangle-\mathrm{Pay}(q,r)
    \right\}.
\end{equation}
One-shot incentive compatibility requires that such a trader can move the market price to their belief in a single trade: \(\mathrm{InstPrice}(q+r)=\mu\), where \(\mathrm{InstPrice}(q)=\nabla C(q)\).

The DCFMM satisfies all of the guarantees above, including one-shot incentive compatibility and bounded worst-case loss \cite{abernethy2013efficient}.
SQPM preserves the main safety and expressiveness guarantees but replaces one-shot incentive compatibility with an incremental form: if a sequence of traders share the same belief \(\mu\) and each maximizes expected profit, the induced prices satisfy \(\nabla C(q_t)\to\mu\) over time \cite{nueve2025smooth}.

\section{Additional Experimental Results}
\label{app:additional-experiments}

\begin{table}[h]
\centering
\small
\begin{tabular}{@{}rrrrr@{}}
\toprule
& \multicolumn{2}{c}{SQPM ratio} & \multicolumn{2}{c}{FixedEnv ratio} \\
\(d\) & mean & max & mean & max \\
\midrule
2  & \(>\!100\) & \(>\!100\) & 1.39 & 1.59 \\
5  & 22.7       & \(>\!100\) & 1.38 & 1.57 \\
10 & 15.6       & \(>\!100\) & 1.38 & 1.52 \\
20 & 24.8       & \(>\!100\) & 1.38 & 1.46 \\
\bottomrule
\end{tabular}
\caption{Fee-to-Bregman ratios over up to 3{,}000 nondegenerate state--trade pairs per \(d\). SpanPM is pointwise no larger than FixedEnv and has the sharper factor \(\rho(\kappa s(r))\).}
\label{tab:calibration}
\end{table}

\begin{table}[h]
\centering
\small
\begin{tabular}{@{}llrrr@{}}
\toprule
\(d\) & mechanism & err.\ @40 & surplus & excess fee \\
\midrule
     & DCFMM  & 0.112 & 0.754 & 0.000 \\
5    & SQPM   & 0.115 & 0.414 & 0.338 \\
     & FixedEnv & 0.112 & 0.726 & 0.023 \\
     & SpanPM & 0.112 & 0.729 & 0.020 \\
\addlinespace
     & DCFMM  & 0.121 & 0.908 & 0.000 \\
10   & SQPM   & 0.124 & 0.482 & 0.406 \\
     & FixedEnv & 0.121 & 0.868 & 0.026 \\
     & SpanPM & 0.121 & 0.871 & 0.024 \\
\bottomrule
\end{tabular}
\caption{Means after 40 signals over 100 paired seeds.}
\label{tab:informed}
\end{table}

The evolving-belief experiment reports final price error, cumulative ex-ante trader surplus, and cumulative excess fee
\(
\sum_t[\mathrm{Fee}(q_t,r_t)-D_C(q_t+r_t,q_t)]
\), the payment above exact exposure.

\begin{figure}[h]
\centering
\includegraphics[width=\columnwidth]{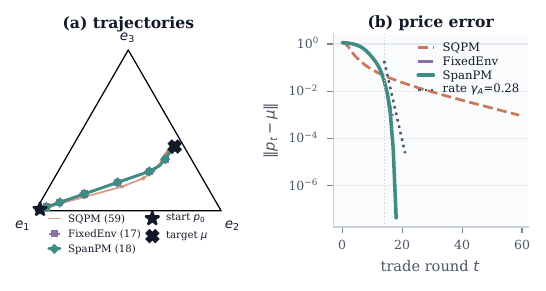}
\caption{Repeated numerically computed responses toward a fixed belief (softmax, \(L=3\)). \textbf{(a)} SQPM needs 59 trades to reach tolerance \(10^{-3}\), the fixed-envelope baseline needs 17, and SpanPM reaches the tighter \(10^{-6}\) tolerance in 18. \textbf{(b)} SpanPM matches capped movement far from consensus, then removes fixed damping as its realized trade span shrinks.}
\label{fig:price-discovery}
\end{figure}

\section{Illustrative Figure Details}
\label{app:figure-details}

Both Figure~\ref{fig:overcharge-1d} and Figure~\ref{fig:overcharge-simplex} use the softmax cost with \(L=3\) and a boundary trade \(s(r)=\tau=0.15\), giving \(A=\kappa\tau=0.9\), the cap shared by FixedEnv and SpanPM.

\subsection{Trade-Cap Sensitivity}
\label{app:tau-sensitivity}

Table~\ref{tab:tau-sensitivity} reports a paired sweep using the protocol of Section~\ref{sec:exp-speed}, with 200 seeds per dimension and \(T=20\). Larger caps accelerate early information incorporation but loosen the theoretical overcharge bound.

\begin{table}[h]
\centering
\small
\setlength{\tabcolsep}{3pt}
\begin{tabular}{@{}rrrrrr@{}}
\toprule
& & \multicolumn{2}{c}{\(d=5\)} & \multicolumn{2}{c}{\(d=10\)} \\
\(\tau\) & \(\rho(A)\) & err.\ @20 & conv. & err.\ @20 & conv. \\
\midrule
.025 & 1.105 & .710 & 0.0\% & .732 & 0.0\% \\
.050 & 1.221 & .375 & 0.0\% & .423 & 0.0\% \\
.100 & 1.493 & \(4.1{\times}10^{-5}\) & 69.5\% & .0157 & 31.0\% \\
.150 & 1.825 & \(2.6{\times}10^{-7}\) & 100.0\% & \(4.5{\times}10^{-6}\) & 97.5\% \\
.200 & 2.235 & \(2.6{\times}10^{-7}\) & 100.0\% & \(1.2{\times}10^{-6}\) & 100.0\% \\
\bottomrule
\end{tabular}
\caption{Median round-20 error and convergence rate at tolerance \(10^{-3}\) for FixedEnv under exact constrained best responses.}
\label{tab:tau-sensitivity}
\end{table}

\begin{figure}[h]
\centering
\includegraphics[width=.78\columnwidth]{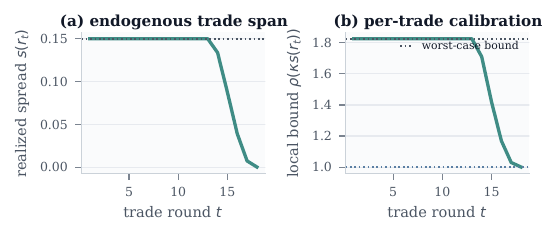}
\caption{SpanPM diagnostics for Figure~\ref{fig:price-discovery}. The realized span stays at the hard cap while disagreement is large, then shrinks toward zero; its trade-specific calibration factor simultaneously approaches one.}
\label{fig:span-diagnostics}
\end{figure}

\textbf{Response-solver audit.} On 20 held-out random states split across \(d=5,10\), the production 17-point multi-basin search was compared with a 1{,}001-point outer grid. Its largest objective gap was \(6.9\times10^{-18}\); the smallest fee-domination and drift margins were \(6.0\times10^{-4}\) and \(6.8\times10^{-3}\), respectively. These checks support the numerical implementation but do not constitute a symbolic global certificate.

On the illustrative fixed-belief path, regressing \(\log\|p_{t+1}-\mu\|\) on \(\log\|p_t-\mu\|\) over the five local points with error between \(10^{-5}\) and \(10^{-1}\) gives slope 1.84, approaching the quadratic-order prediction before numerical precision truncates the path.
Figure~\ref{fig:overcharge-1d} uses a two-outcome market (\(d=2\)) and sweeps the market's confidence in outcome~1 from parity toward certainty along a single axis.
Figure~\ref{fig:overcharge-simplex} extends this to a three-outcome market (\(d=3\)) and shows the same overcharge ratio over the full belief simplex, rather than along a single confidence axis, confirming that the pattern in Figure~\ref{fig:overcharge-1d} is not an artifact of the two-outcome case.

\section{Proofs for Curvature-Adaptive Fee Design}
\label{app:fee-design}

\subsection{Proof of Proposition~\ref{prop:bregman-unique}}

\noindent\textbf{Proposition~\ref{prop:bregman-unique}} (Uniqueness of the Bregman fee)\textbf{.} \textit{Fix a differentiable, translation-invariant cost function \(C\) whose price map is injective on \(E\). Consider a differentiable price-plus-fee market with payment \(\mathrm{Pay}(q,r)=\langle \nabla C(q),r\rangle+\mathrm{Fee}(q,r)\), where \(\mathrm{Fee}(q,0)=0\) and \(\mathrm{Fee}(q,r+\alpha\mathbf{1})=\mathrm{Fee}(q,r)\) for every \(\alpha\in\mathbb{R}\). If the market satisfies one-shot incentive compatibility for every belief in the price range of \(C\), then \(\mathrm{Fee}(q,r)=D_C(q+r,q)\).}

\begin{proof}
Fix \(q\) and an arbitrary \(r_E\in E\), and set \(\mu=\nabla C(q+r_E)\).
One-shot incentive compatibility provides an optimal trade whose terminal price is \(\mu\).
The objective is unchanged by adding \(\alpha\mathbf{1}\).
Writing \(q=q_E+\beta\mathbf{1}\), translation invariance of \(\nabla C\) and injectivity of \(\nabla C|_E\) therefore make the optimal trade's zero-sum component equal to \(r_E\).
The first-order condition on \(E\) is
\begin{equation}
    \nabla_E\mathrm{Fee}(q,r_E)
    =
    \mu-\nabla C(q)
    =
    \nabla_E D_C(q+r_E,q).
\end{equation}
Because \(r_E\) was arbitrary, \(\mathrm{Fee}(q,r_E)-D_C(q+r_E,q)\) is constant on \(E\).
Both terms vanish at \(r_E=0\), so the constant is zero.
Finally, translation invariance gives \(D_C(q+r_E+\alpha\mathbf{1},q)=D_C(q+r_E,q)\); the assumed invariance of \(\mathrm{Fee}\) extends the identity to every \(r\in\mathbb{R}^d\).
\end{proof}

\subsection{Proof of Theorem~\ref{thm:span-price-discovery}}

\begin{proof}
Work on the zero-sum subspace \(E\), and write
\(\Phi_\mu(q)=C(q)-\langle\mu,q\rangle\) and
\(e(q)=\nabla C(q)-\mu\). Let
\begin{equation}
 m_q(r)=\langle e(q),r\rangle+
 \mathrm{Fee}_{\mathrm{span}}(q,r)
\end{equation}
on the compact ball \(B=\{r\in E:s(r)\le\bar\tau\}\). The objective is continuous, so a global minimizer exists; choose any one and call it \(r_q\). Fee domination gives
\begin{equation}
 \Phi_\mu(q+r_q)-\Phi_\mu(q)
 \le m_q(r_q).
 \label{eq:span-descent}
\end{equation}

The sublevel set \(K=\{q\in E:\Phi_\mu(q)\le\Phi_\mu(q_0)\}\) is compact. On \(K\), \(\|\nabla^2C(q)\|\le M\), and
\(\omega^+(\kappa s(r))\le\omega^+(\bar A)=w_{\max}\). For a sufficiently small constant \(\delta>0\), the candidate \(r=-\delta e(q)\) is feasible throughout \(K\) and satisfies
\begin{equation}
\begin{aligned}
 m_q(-\delta e(q))
 &\le(-\delta+w_{\max}M\delta^2)\|e(q)\|^2\\
 &\le-\tfrac{\delta}{2}\|e(q)\|^2.
\end{aligned}
\end{equation}
Because \(r_q\) is global, it does at least as well. Away from the unique \(q^\star\) with \(\nabla C(q^\star)=\mu\), \(\|e(q)\|\) has a positive minimum on compact subsets of \(K\). Equation~\eqref{eq:span-descent} therefore gives a uniform Lyapunov decrease away from every neighborhood of \(q^\star\). Since \(\Phi_\mu\) is bounded below, \(q_t\to q^\star\), hence \(p_t\to\mu\).

For the local rate, uniformly near \(q^\star\),
\begin{align}
 \mathrm{Fee}_{\mathrm{span}}(q,r)
 &=\tfrac12 r^\top\nabla^2C(q)r+R_q(r),\\
 R_q(r)&=O(\|r\|^3),\quad
 \partial R_q(r)=O(\|r\|^2),
\end{align}
where \(\partial\) is the Clarke subdifferential. The second relation follows because \(s\) is Lipschitz on \(E\), \(\omega^+(a)=1/2+a/6+O(a^2)\), and the quadratic form and its gradient have orders two and one. At \(e=0\), the unique global response is zero; compactness therefore forces global responses to remain local as \(e\to0\). Local positive definiteness then gives \(\|r_q\|=O(\|e(q)\|)\). Clarke stationarity yields
\begin{equation}
 0\in e(q)+\nabla^2C(q)r_q+O(\|r_q\|^2),
\end{equation}
so
\(r_q=-\nabla^2C(q)^\dagger e(q)+O(\|e(q)\|^2)\).
The cap is inactive locally, and a Taylor expansion gives
\begin{align}
 e(q+r_q)
 &=e(q)+\nabla^2C(q)r_q+O(\|r_q\|^2)\\
 &=O(\|e(q)\|^2),
\end{align}
which is the stated quadratic price-error rate.
\end{proof}

\subsection{Proof of Theorem~\ref{thm:dominating-guarantees}}

\noindent\textbf{Theorem~\ref{thm:dominating-guarantees}} (Guarantees from dominating fees)\textbf{.} \textit{If \(\mathrm{Fee}\) is \(C\)-dominating, then the corresponding price-plus-fee market has well-defined instantaneous prices, satisfies no arbitrage, preserves expressiveness, and incorporates information. Moreover, on every trade history, the market maker's collected revenue is at least the revenue of the corresponding DCFMM. Consequently, the market maker's worst-case loss is no larger than under the DCFMM.}

\begin{proof}
By~\eqref{eq:dominating-price},
\(\mathrm{Pay}(q,\varepsilon v)/\varepsilon\to\langle\nabla C(q),v\rangle\), so the instantaneous price remains \(\nabla C(q)\) and expressiveness is unchanged.

Condition~\eqref{eq:dominating-protection} gives
\begin{align}
    \mathrm{Pay}(q,r)
    &\ge \langle p_q,r\rangle+D_C(q+r,q) \notag\\
    &=C(q+r)-C(q).
\end{align}
Hence, along any finite admissible history,
\begin{equation}
    \sum_{t=0}^{T-1}\mathrm{Pay}(q_t,r_t)
    \ge C(q_T)-C(q_0).
\end{equation}
Let \(m=\min_i(q_T-q_0)_i\) and choose \(i\) attaining the minimum.
Since \(q_T\ge q_0+m\mathbf{1}\) coordinatewise, monotonicity and translation invariance imply
\(C(q_T)-C(q_0)\ge m=\sum_t(r_t)_i\), which rules out a risk-free profit.

For information incorporation, let
\(\Delta\mathrm{Pay}=\mathrm{Pay}(q+r,r)-\mathrm{Pay}(q,r)\).
\begin{align}
    \Delta\mathrm{Pay}
    &=\langle p_{q+r}-p_q,r\rangle \notag\\
    &\quad+
    \mathrm{Fee}(q+r,r)-\mathrm{Fee}(q,r).
\end{align}
Condition~\eqref{eq:dominating-drift} makes this quantity nonnegative, so \(\mathrm{Pay}(q+r,r)\ge \mathrm{Pay}(q,r)\).

The same pathwise inequality shows that total revenue is at least the DCFMM revenue \(C(q_T)-C(q_0)\).
Realized payouts are identical, so the loss is no larger for any history and outcome, and therefore no larger in the worst case.
\end{proof}

\subsection{Proof of Proposition~\ref{prop:sqpm-unbounded-overcharge}}

\noindent\textbf{Proposition~\ref{prop:sqpm-unbounded-overcharge}} (Unbounded overcharge of the global quadratic fee)\textbf{.} \textit{For the softmax cost function, there exist fixed nonzero trades \(r\) such that the overcharge ratio is unbounded: for every \(M>0\), there is a market state \(q\) at which \(\mathrm{Fee}_L(q,r)/D_C(q+r,q) > M\).}

\begin{proof}
For the softmax cost \(C(q)=\frac{1}{L}\log\sum_{i=1}^d e^{Lq_i}\), the Bregman fee can be written as
\begin{align}
    D_C(q+r,q)
    &=
    \frac{1}{L}\log\left(\sum_i (p_q)_i e^{Lr_i}\right) \notag\\
    &\quad-\sum_i (p_q)_i r_i.
\end{align}
Choose a fixed nonconstant trade \(r\notin\operatorname{span}\{\mathbf{1}\}\).
Fix an outcome \(k\) and let \(q^{(n)}=n\delta_k\), so that \((p_{q^{(n)}})_k=e^{Ln}/(e^{Ln}+(d-1))\to 1\) and \((p_{q^{(n)}})_i\to 0\) for \(i\ne k\), i.e., \(p_{q^{(n)}}\to \delta_k\).
Then
\begin{equation}
    D_C(q^{(n)}+r,q^{(n)})
    \to
    \frac{1}{L}\log(e^{Lr_k})-r_k
    =
    0.
\end{equation}
For every finite \(n\), the softmax distribution has full support, so strict Jensen inequality gives \(D_C(q^{(n)}+r,q^{(n)})>0\).
By contrast, \(\mathrm{Fee}_L(q^{(n)},r)=\frac{L}{2}\|r\|^2\) is constant and strictly positive.
Therefore
\begin{equation}
    \frac{\mathrm{Fee}_L(q^{(n)},r)}
    {D_C(q^{(n)}+r,q^{(n)})}
    \to
    \infty
\end{equation}
as \(n\to\infty\), which proves the claim.
\end{proof}

\subsection{Guarantees for the Curvature-Adaptive Fee}

\begin{theorem}[Safety and bounded overcharge]
\label{thm:ca-safety-overcharge}
Assume that \(C\) satisfies \(\kappa\)-curvature stability, and restrict trades to \(s(r)\le\tau\).
Let \(A=\kappa\tau\).
Then, for every admissible trade,
\begin{equation}
    D_C(q+r,q)
    \le
    \mathrm{Fee}_A(q,r)
    \le
    \rho(A)D_C(q+r,q),
\end{equation}
where \(\rho(A)=\omega^+(A)/\omega^-(A)\).
Thus the overcharge ratio is at most \(\rho(A)\) whenever \(D_C(q+r,q)>0\); if \(D_C(q+r,q)=0\), both fees vanish.
\end{theorem}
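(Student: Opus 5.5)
The plan is to derive both inequalities from the two-sided local bound stated in the main text and then use monotonicity of \(\omega^\pm\) in the curvature parameter. Fix a state \(q\) and an admissible trade \(r\), and set \(a=\kappa s(r)\le A\) and \(Q=\langle r,\nabla^2C(q)r\rangle\ge 0\). Since \(C\) is convex, \(Q\ge0\).

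The first step establishes the two-sided bound \(\omega^-(a)Q\le D_C(q+r,q)\le\omega^+(a)Q\). I would start from the integral representation \(D_C(q+r,q)=\int_0^1(1-\theta)\langle r,\nabla^2C(q+\theta r)r\rangle\,d\theta\). Positive homogeneity of the gauge gives \(s(\theta r)=\theta s(r)\), so curvature stability with \(u=\theta r\) yields
\[
e^{-\theta a}Q\le\langle r,\nabla^2C(q+\theta r)r\rangle\le e^{\theta a}Q .
\]
Integrating against the weight \(1-\theta\) gives \(\int_0^1(1-\theta)e^{\pm\theta a}\,d\theta=\omega^\pm(a)\), which is an elementary integration by parts. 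The case \(a=0\) is recovered by continuity.

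The second step uses monotonicity. The function \(\omega^+(a)=\sum_{k\ge0}a^k/(k+2)!\) is increasing on \(a\ge0\), and \(\omega^-(a)=\int_0^1(1-\theta)e^{-\theta a}\,d\theta\) is decreasing. Hence
\[
D_C\le\omega^+(a)Q\le\omega^+(A)Q=\mathrm{Fee}_A(q,r),
\]
which is the domination claim. For the upper bound, \(\omega^-(A)\le\omega^-(a)\) gives \(\omega^-(A)Q\le D_C\). Multiplying through by \(\rho(A)=\omega^+(A)/\omega^-(A)\) yields \(\mathrm{Fee}_A=\omega^+(A)Q\le\rho(A)D_C\). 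This uses \(\omega^-(A)>0\), which holds because the integrand is positive.

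The last step is the degenerate case. If \(D_C(q+r,q)=0\), the lower bound \(\omega^-(a)Q\le 0\) with \(\omega^-(a)>0\) forces \(Q=0\), so \(\mathrm{Fee}_A=0\). When \(D_C>0\), dividing by it gives the ratio bound.

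The main subtlety is the first step. Curvature stability must be applied along the segment with a gauge value that scales linearly in \(\theta\); this is exactly what homogeneity provides. The exponential integrals also have to be matched to the stated closed forms of \(\omega^\pm\). Everything after that is monotonicity and bookkeeping.
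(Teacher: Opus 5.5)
Your proposal is correct and follows the paper's proof essentially step for step. You obtain the two-sided bound $\omega^-(a)Q\le D_C(q+r,q)\le\omega^+(a)Q$ from the curvature integral and the homogeneity $s(\theta r)=\theta s(r)$, then use monotonicity of $\omega^\pm$ to pass from $a=\kappa s(r)$ to $A$ and rearrange; your explicit handling of the degenerate case, using $\omega^-(a)>0$ to force $Q=0$, is a small sound addition that the paper leaves implicit.
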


The first inequality gives market-maker protection, and the second gives risk calibration.
Unlike the SQPM fee, the fixed-envelope fee's overcharge ratio cannot diverge in confident markets.

\begin{figure}[t]
\centering
\includegraphics[width=\columnwidth]{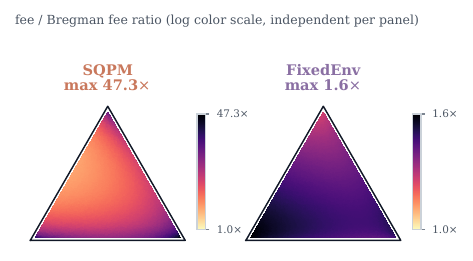}
\caption{Overcharge ratio \(\mathrm{Fee}(q,r)/D_C(q+r,q)\) over the belief simplex \(\Delta_3\) for a fixed trade \(r\) with \(s(r)=\tau\) (softmax cost, \(L=3\), \(A=0.9\), matching Figures~\ref{fig:overcharge-1d} and~\ref{fig:price-discovery}). Darker means larger overcharge; \textbf{each panel uses its own log color scale}, since the two ranges differ by more than an order of magnitude and a shared scale would make FixedEnv's variation invisible. \textbf{Left:} SQPM's ratio grows without bound as the market approaches the vertex where \(r\) carries little risk (Proposition~\ref{prop:sqpm-unbounded-overcharge}), reaching \(47\times\) here. \textbf{Right:} FixedEnv's ratio stays within the uniform bound \(\rho(A)\approx 1.83\) guaranteed by Theorem~\ref{thm:ca-safety-overcharge}, topping out at \(1.6\times\), roughly \(30\times\) tighter than SQPM at its worst.}
\label{fig:overcharge-simplex}
\end{figure}

For the softmax cost, local curvature has the form \(\langle r,\nabla^2 C(q)r\rangle=L\,\operatorname{Var}_{p_q}(r)\), so the fixed-envelope fee charges in proportion to the payoff variance of the trade under the current market belief: trades with high payoff uncertainty are charged more, while trades that are nearly deterministic under the current belief are charged less.

\begin{theorem}[Information incorporation]
\label{thm:ca-information-incorporation}
Assume that \(C\) satisfies \(\kappa\)-curvature stability, restrict trades to \(s(r)\le\tau\), and let \(A=\kappa\tau\).
If \(A\le A^\star\), where \(A^\star\) is the positive solution of
\begin{equation}
    e^A=1+2A,
\end{equation}
then, for every admissible trade, the curvature-adaptive fee satisfies the drift condition
\begin{equation}
    \mathrm{Fee}_A(q,r)-\mathrm{Fee}_A(q+r,r)
    \le
    \langle p_{q+r}-p_q,r\rangle .
\end{equation}
Consequently, the fixed-envelope rule preserves information incorporation.
Combined with Theorem~\ref{thm:ca-safety-overcharge}, it also preserves no arbitrage, expressiveness, and bounded worst-case loss no larger than the corresponding DCFMM.
\end{theorem}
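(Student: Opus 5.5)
The plan is to verify the drift inequality \eqref{eq:dominating-drift} directly for \(\mathrm{Fee}_A\), using only curvature stability along the segment from \(q\) to \(q+r\). The consequences then follow from Theorem~\ref{thm:dominating-guarantees}. Fix an admissible trade \(r\), write \(a=\kappa s(r)\le A\), \(H_\theta=\nabla^2C(q+\theta r)\), and \(Q_\theta=\langle r,H_\theta r\rangle\ge 0\). The left side of the drift condition is \(\omega^+(A)(Q_0-Q_1)\). By the fundamental theorem of calculus applied to \(\nabla C\), the right side is
\begin{equation}
\langle p_{q+r}-p_q,r\rangle=\int_0^1 Q_\theta\,d\theta .
\end{equation}

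\textbf{Step 1: bound both sides by \(Q_0\).} Apply curvature stability with direction \(u=\theta r\), so \(s(u)=\theta s(r)\) by homogeneity. This gives \(H_\theta\succeq e^{-a\theta}H_0\), hence \(Q_\theta\ge e^{-a\theta}Q_0\). Integrating yields the lower bound
\begin{equation}
\int_0^1 Q_\theta\,d\theta\ \ge\ \frac{1-e^{-a}}{a}\,Q_0,
\end{equation}
interpreted as \(Q_0\) when \(a=0\). Taking \(\theta=1\) gives \(Q_1\ge e^{-a}Q_0\), so \(Q_0-Q_1\le(1-e^{-a})Q_0\). It therefore suffices to show
\begin{equation}
\omega^+(A)\,(1-e^{-a})\,Q_0\le \frac{1-e^{-a}}{a}\,Q_0 .
\end{equation}
When \(Q_0=0\) or \(a=0\) this holds trivially: the left side of the drift inequality is then nonpositive, while the right side is nonnegative by monotonicity of \(\nabla C\). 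Otherwise the requirement reduces to \(a\,\omega^+(A)\le 1\).

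\textbf{Step 2: the scalar inequality.} Since \(a\le A\), it suffices that \(A\,\omega^+(A)\le1\), that is, \((e^A-1-A)/A\le 1\), or equivalently \(e^A\le 1+2A\). The function \(g(A)=e^A-1-2A\) is strictly convex. It satisfies \(g(0)=0\) and \(g'(0)=-1<0\), and it vanishes again exactly at \(A^\star\). Hence \(g\le0\) on \([0,A^\star]\), which is precisely the hypothesis \(A\le A^\star\) and explains where that threshold comes from.

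\textbf{Step 3: consequences.} Theorem~\ref{thm:ca-safety-overcharge} supplies the protection condition \eqref{eq:dominating-protection} on admissible trades. The price condition \eqref{eq:dominating-price} holds because \(\mathrm{Fee}_A(q,r)=O(\|r\|^2)\). Step 2 supplies \eqref{eq:dominating-drift}. So \(\mathrm{Fee}_A\) is \(C\)-dominating, and Theorem~\ref{thm:dominating-guarantees} yields information incorporation, no arbitrage, expressiveness, and worst-case loss no larger than the DCFMM.

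I expect the only delicate point to be Step 1. One must use the \emph{lower} curvature bound on both sides at once, and should not bound \(Q_1\) and the integral separately with opposite inequalities. The common factor \(1-e^{-a}\) then cancels, leaving the clean scalar condition on \(A\).
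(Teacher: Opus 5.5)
Your proposal is correct and follows essentially the same route as the paper: both use the lower curvature-stability bound $g(\theta)\ge e^{-a\theta}g(0)$ to bound $g(0)-g(1)$ from above and $\int_0^1 g(\theta)\,d\theta$ from below by multiples of $g(0)$, reduce to $\omega^+(A)\le 1/A\iff e^A\le 1+2A$, and then invoke Theorems~\ref{thm:ca-safety-overcharge} and~\ref{thm:dominating-guarantees} for the consequences. Your treatment of the degenerate cases, where you note that the left side is nonpositive, and your convexity argument that $e^A-1-2A\le 0$ on $[0,A^\star]$ are slightly more explicit than the paper's, but the argument is the same.
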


\subsection{Proof of Theorem~\ref{thm:ca-safety-overcharge}}

\begin{proof}
Fix an admissible \(r\) and let \(a=\kappa s(r)\le A\).
The local two-sided bound gives
\begin{equation}
    \begin{aligned}
        \omega^-(a)\langle r,\nabla^2 C(q)r\rangle
        &\le D_C(q+r,q),\\
        D_C(q+r,q)
        &\le \omega^+(a)\langle r,\nabla^2 C(q)r\rangle .
    \end{aligned}
\end{equation}
Since \(\omega^+\) is nondecreasing and \(\omega^-\) is nonincreasing,
\begin{align}
    D_C(q+r,q)
    &\le \omega^+(A)\langle r,\Hq r\rangle \notag\\
    &=\mathrm{Fee}_A(q,r)
\end{align}
and
\begin{align}
    D_C(q+r,q)
    &\ge \omega^-(A)\langle r,\Hq r\rangle \notag\\
    &=\frac{\omega^-(A)}{\omega^+(A)}\mathrm{Fee}_A(q,r).
\end{align}
Rearranging the second inequality yields the stated factor \(\rho(A)\).
\end{proof}

\subsection{Proof of Theorem~\ref{thm:ca-information-incorporation}}

\begin{proof}
Let \(g(\theta)=\langle r,\nabla^2 C(q+\theta r)r\rangle\ge0\) and \(a=\kappa s(r)\le A\).
The two sides of the drift condition are, respectively,
\(\omega^+(A)[g(0)-g(1)]\) and
\begin{equation}
    \langle \nabla C(q+r)-\nabla C(q),r\rangle
    =
    \int_0^1
    g(\theta)
    \,d\theta.
\end{equation}
If \(a=0\), curvature stability makes \(g\) constant, and the inequality is immediate.
If \(g(0)=0\), the upper curvature bound gives \(g(\theta)=0\) throughout.
It remains to consider \(a>0\) and \(g(0)>0\).
Curvature stability gives \(g(\theta)\ge e^{-a\theta}g(0)\) for every \(\theta\in[0,1]\) (in particular \(g(1)\ge e^{-a}g(0)\)), so
\begin{align*}
    g(0)-g(1)
    &\le
    g(0)\bigl(1-e^{-a}\bigr), \\
    \int_0^1 g(\theta)\,d\theta
    &\ge
    g(0)\int_0^1 e^{-a\theta}\,d\theta
    =
    g(0)\,\frac{1-e^{-a}}{a}.
\end{align*}
Thus it suffices that \(\omega^+(A)\le1/a\) for every \(a\in(0,A]\).
Since \(a>0\) implies \(A>0\), the tightest requirement is
\begin{align}
    \omega^+(A)\le\frac{1}{A}
    &\iff
    \frac{e^A-1-A}{A^2}\le\frac{1}{A} \notag\\
    &\iff
    e^A\le 1+2A,
\end{align}
which holds for \(A\le A^\star\).
Thus~\eqref{eq:dominating-drift} holds.
Moreover, \(\mathrm{Fee}_A(q,r)=O(\|r\|^2)=o(\|r\|)\), and Theorem~\ref{thm:ca-safety-overcharge} supplies~\eqref{eq:dominating-protection}.
The remaining guarantees follow from Theorem~\ref{thm:dominating-guarantees}.
\end{proof}

\subsection{Proof of Theorem~\ref{thm:span-guarantees}}

\begin{proof}
Fix \(r\) and write \(a=\kappa s(r)\le\bar A\),
\(g_q(r)=\langle r,\nabla^2C(q)r\rangle\), and
\(D=D_C(q+r,q)\). The local two-sided bound gives
\begin{equation}
 \omega^-(a)g_q(r)\le D
 \le\omega^+(a)g_q(r)=\mathrm{Fee}_{\mathrm{span}}(q,r).
\end{equation}
Rearranging the lower bound yields
\(\mathrm{Fee}_{\mathrm{span}}(q,r)\le\rho(a)D_C(q+r,q)\), and monotonicity of \(\rho\) gives the final inequality.

For information incorporation, let
\(g(\theta)=\langle r,\nabla^2C(q+\theta r)r\rangle\) and
\(\Delta_F=\mathrm{Fee}_{\mathrm{span}}(q,r)-\mathrm{Fee}_{\mathrm{span}}(q+r,r)\).
Because \(a\) depends on \(r\), not on the state,
\begin{equation}
\begin{aligned}
 \Delta_F&=\omega^+(a)[g(0)-g(1)]\\
 &\le\int_0^1g(\theta)\,d\theta
 =\langle p_{q+r}-p_q,r\rangle .
\end{aligned}
\end{equation}
The inequality is exactly the proof for fixed \(A\), now with \(A=a\le A^\star\). Finally, \(a=O(\|r\|)\), \(\omega^+(a)=1/2+O(\|r\|)\), and \(g_q(r)=O(\|r\|^2)\), so the fee is \(o(\|r\|)\). It is therefore \(C\)-dominating, and Theorem~\ref{thm:dominating-guarantees} supplies the remaining claims.
\end{proof}

\subsection{Global-response reduction}

Taking zero-sum representatives, let \(b=p_q-\mu\), \(H=\nabla^2C(q)\), and define the original response value
\begin{equation}
 \min_{r\in E:\,s(r)\le\bar\tau}
 \{\langle b,r\rangle+\omega^+(\kappa s(r))r^\top Hr\}.
 \label{eq:span-original-app}
\end{equation}
This equals the nested problem in~\eqref{eq:outer-span}. Indeed, every \(r\) in~\eqref{eq:span-original-app} is feasible in the nested problem with \(u=s(r)\), giving one direction. Conversely, for every feasible \((u,r)\), monotonicity of \(\omega^+\) and \(s(r)\le u\) imply
\begin{equation}
 \langle b,r\rangle+\omega^+(\kappa s(r))r^\top Hr
 \le \langle b,r\rangle+\omega^+(\kappa u)r^\top Hr.
\end{equation}
Thus a joint optimum may be chosen with \(u=s(r)\). For fixed \(u\), the inner objective is convex quadratic on a compact convex set. Its value is continuous in \(u\) by the maximum theorem, so a global outer minimizer exists. The softmax implementation evaluates a deterministic outer grid, refines every grid-visible basin, and checks the result against a much denser audit grid; because the outer value need not be convex, the experiments do not treat a single local scalar solve as a global response.

\section{Curvature Stability and Local Bounds}
\label{app:curvature}

\subsection{Softmax Curvature Stability}

\begin{proof}
For the softmax cost, the price vector is \(p_q=\nabla C(q)\), and
\begin{align*}
    C(q)&=\frac{1}{L}\log\sum_{i=1}^d e^{Lq_i}, \\
    \nabla^2 C(q)
    &=
    L\bigl(\operatorname{diag}(p_q)-p_qp_q^\top\bigr), \\
    \langle r,\nabla^2 C(q)r\rangle
    &=
    L\,\operatorname{Var}_{p_q}(r)
    \quad\text{for any }r.
\end{align*}
Under a state shift \(u\), the softmax weights are reweighted by factors \(e^{Lu_i}\).
Writing \(p=p_q\), \(p'=p_{q+u}\), and \(s(u)=\max_i u_i-\min_i u_i\),
\begin{equation}
    e^{-Ls(u)}
    \le \frac{p'_i}{p_i}
    =\frac{e^{Lu_i}}{\sum_j p_j e^{Lu_j}}
    \le e^{Ls(u)}.
\end{equation}
The pairwise identity
\begin{equation}
    \operatorname{Var}_p(r)
    =\frac12\sum_{i,j}p_ip_j(r_i-r_j)^2
\end{equation}
therefore yields
\begin{equation}
    e^{-2Ls(u)}\operatorname{Var}_{p}(r)
    \le \operatorname{Var}_{p'}(r)
    \le e^{2Ls(u)}\operatorname{Var}_{p}(r).
\end{equation}
Since this holds for every \(r\),
\begin{equation}
    \begin{aligned}
        e^{-2Ls(u)}\Hq
        &\preceq \nabla^2 C(q+u),\\
        \nabla^2 C(q+u)
        &\preceq e^{2Ls(u)}\Hq.
    \end{aligned}
\end{equation}
Hence softmax satisfies curvature stability with \(\kappa=2L\).
\end{proof}

\subsection{Proof of the Local Two-Sided Bound}

\begin{proof}
By the curvature integral representation, \(D_C(q+r,q)=\int_0^1(1-\theta)\langle r,\nabla^2 C(q+\theta r)r\rangle\,d\theta\); curvature stability gives
\begin{align}
    e^{-\kappa s(\theta r)}
    \langle r,\nabla^2 C(q)r\rangle
    &\le
    \langle r,\nabla^2 C(q+\theta r)r\rangle \notag\\
    &\le
    e^{\kappa s(\theta r)}
    \langle r,\nabla^2 C(q)r\rangle.
\end{align}
Since the gauge \(s\) is positively homogeneous, \(s(\theta r)=\theta\,s(r)\) for \(\theta\in[0,1]\).
Let \(a=\kappa s(r)\).
Then
\begin{align}
    D_C(q+r,q)
    &\le
    \langle r,\nabla^2 C(q)r\rangle
    \int_0^1 (1-\theta)e^{a\theta}\,d\theta, \notag\\
    D_C(q+r,q)
    &\ge
    \langle r,\nabla^2 C(q)r\rangle
    \int_0^1 (1-\theta)e^{-a\theta}\,d\theta.
\end{align}
Evaluating the integrals gives
\begin{align*}
    \int_0^1 (1-\theta)e^{a\theta}\,d\theta
    &=
    \frac{e^a-1-a}{a^2}
    =
    \omega^+(a), \\
    \int_0^1 (1-\theta)e^{-a\theta}\,d\theta
    &=
    \frac{e^{-a}-1+a}{a^2}
    =
    \omega^-(a).
\end{align*}
The formulas extend to \(a=0\) by continuity, with value \(1/2\).
\end{proof}

\begin{proof}[Monotonicity of \(\omega^+\) and \(\omega^-\)]
For \(a\ge0\), the integral representations give
\begin{align}
    \frac{d}{da}\omega^+(a)
    &=
    \int_0^1 (1-\theta)\theta\, e^{a\theta}\,d\theta
    \;\ge\; 0,
    \notag \\[2pt]
    \frac{d}{da}\omega^-(a)
    &=
    -\int_0^1 (1-\theta)\theta\, e^{-a\theta}\,d\theta
    \;\le\; 0,
\end{align}
so \(\omega^+\) is nondecreasing and \(\omega^-\) is nonincreasing.
In particular, \(\omega^+(A)\ge\omega^+(0)=1/2\).
\end{proof}

\section{Proofs for Price Discovery Dynamics}
\label{app:dynamics}

\subsection{SQPM as Gradient Descent}

\begin{proof}
Let \(\Phi_\mu(q)=C(q)-\langle \mu,q\rangle\).
The SQPM first-order condition gives
\(r=-(1/L)\nabla\Phi_\mu(q)\), hence
\begin{equation}
    q^+
    =
    q+r
    =
    q-\frac{1}{L}\nabla\Phi_\mu(q).
\end{equation}
\end{proof}

\subsection{Curvature-Adaptive Fees as Damped Newton Steps}

\begin{proof}
When the cap is inactive, the first-order condition on \(E\) is
\begin{equation}
    \nabla C(q)-\mu+2\omega^+(A)\nabla^2 C(q)r=0.
\end{equation}
The Hessian is positive definite on \(E\), so the zero-sum representative is
\begin{equation}
    r
    =
    \frac{1}{2\omega^+(A)}
    \nabla^2 C(q)^\dagger
    \bigl(\mu-\nabla C(q)\bigr).
\end{equation}
This is a Newton step for \(\Phi_\mu\) with damping \(\eta_A=1/(2\omega^+(A))\).
\end{proof}

\subsection{Proof of Theorem~\ref{thm:ca-price-discovery}}

\noindent\textbf{Theorem~\ref{thm:ca-price-discovery}} (Price discovery under the fixed-envelope fee)\textbf{.} \textit{Under the regularity assumptions in Appendix~\ref{app:preliminaries}, assume that \(C\) satisfies curvature stability, \(\tau>0\), and \(A=\kappa\tau\le A^\star\). Assume also that the gauge is invariant to riskless shifts and restricts to a norm on \(E\). Fix \(\mu\in\operatorname{relint}(\Delta_d)\). If traders with belief \(\mu\) repeatedly maximize expected profit under the fixed-envelope fee, then \(\nabla C(q_t)\to\mu\). Once the cap is inactive, \(\|\nabla C(q_{t+1})-\mu\|\le\gamma_A\|\nabla C(q_t)-\mu\|+O(\|\nabla C(q_t)-\mu\|^2)\), where \(\gamma_A=|1-\eta_A|<1\) and \(\eta_A=1/(2\omega^+(A))\).}

\begin{proof}
Because both the objective and the cap are invariant under riskless shifts, take zero-sum representatives throughout.
Write \(\Phi_\mu(q)=C(q)-\langle\mu,q\rangle\), \(e(q)=\nabla C(q)-\mu\), and \(w=\omega^+(A)\).
On the compact set \(B=\{r\in E:s(r)\le\tau\}\), the trader uniquely minimizes
\begin{equation}
    m_q(r)=\langle e(q),r\rangle+w\langle r,\nabla^2C(q)r\rangle.
\end{equation}
Uniqueness follows because \(\nabla^2C(q)\) is positive definite on \(E\).
Denote the minimizer by \(r(q)\).
If \(e(q)\ne0\), then \(m_q(-\varepsilon e(q))<0\) for all sufficiently small \(\varepsilon>0\), whereas \(m_q(0)=0\).
Thus \(m_q(r(q))<0\) unless \(q=q^\star\), the unique point in \(E\) satisfying \(\nabla C(q^\star)=\mu\).

Write \(r_q=r(q)\). The safety bound gives
\begin{align}
    &\Phi_\mu(q+r_q)-\Phi_\mu(q) \notag\\
    &\quad=\langle e(q),r_q\rangle+D_C(q+r_q,q) \notag\\
    &\quad\le m_q(r_q).
\end{align}
Hence \(\Phi_\mu\) decreases strictly away from \(q^\star\).
Let \(K=\{q\in E:\Phi_\mu(q)\le\Phi_\mu(q_0)\}\), which is compact by coercivity and contains every iterate.
The unique minimizer \(r(q)\) is continuous in \(q\), so
\(\Delta(q)=\Phi_\mu(q)-\Phi_\mu(q+r(q))\) is continuous and positive on \(K\setminus\{q^\star\}\).
For every \(\varepsilon>0\), compactness gives
\begin{equation}
    \inf_{q\in K:\,\|q-q^\star\|\ge\varepsilon}\Delta(q)>0.
\end{equation}
The iterates cannot enter this set infinitely often because \(\Phi_\mu(q_t)\) is bounded below.
Therefore \(q_t\to q^\star\), and continuity of \(\nabla C\) gives \(\nabla C(q_t)\to\mu\).

For the local rate, the unconstrained minimizer on \(E\) is
\begin{equation}
    \bar r(q)=-\eta_A\nabla^2C(q)^\dagger e(q),
    \qquad
    \eta_A=\frac{1}{2w}.
\end{equation}
It is continuous and vanishes at \(q^\star\); since \(\tau>0\), the cap is inactive in a neighborhood of \(q^\star\).
There, \(e_t=e(q_t)\in E\).
Since \(C\in C^3\) and the inverse Hessian on \(E\) is locally bounded, Taylor expansion gives
\begin{align}
    e_{t+1}
    &=e_t-\eta_A\nabla^2C(q_t)\nabla^2C(q_t)^\dagger e_t
      +O(\|e_t\|^2) \notag\\
    &=(1-\eta_A)e_t+O(\|e_t\|^2).
\end{align}
Taking norms yields the stated bound.
Finally, \(w\ge1/2\), so \(\eta_A\in(0,1]\) and \(\gamma_A=|1-\eta_A|<1\).
\end{proof}

\end{document}